\newtheorem{theorem}{Theorem}
\theoremstyle{thmstyleone}%
\newtheorem{Corollary}[theorem]{Corollary}%
\newtheorem{Lemma}[theorem]{Lemma}%
\theoremstyle{thmstylethree}%
\newtheorem{definition}{Definition}%
\begin{document}

\title{Error Bound Analysis of Physics-Informed Neural Networks-Driven $T_{2}$ Quantification in Cardiac Magnetic Resonance Imaging}

\author{%
Mengxue Zhang,
Qingrui Cai,
Yinyin Chen,
Hang Jin, Jianjun Zhou,\\
Qiu Guo, Peijun Zhao, Zhiping Mao, Xingxing Zhang, Yuyu Xia, Xianwang Jiang,\\
Qin Xu, Chunyan Xiong, Yirong Zhou, Chengyan Wang,
Xiaobo Qu
$^{*}$\\
[0.5ex]
\small
\\
\small
\texttt
{quxiaobo@xmu.edu.cn}
\thanks{%
This work was supported in part by the National Natural Science Foundation of China
(62331021, 62371410, 62122064), Natural Science Foundation of Fujian Province of China
(2023J02005), National Key R
\&
D Program of China (2023YFF0714200),
Zhou Yongtang Fund for High Talents Team (0621-Z0332004),
President Fund of Xiamen University (20720220063),
Industry-University Cooperation Projects of the Ministry of Education of China
(231107173160805), and the Xiamen University Nanqiang Outstanding Talents Program.
Mengxue Zhang, Qingrui Cai, and Yinyin Chen contributed equally to this work.
Corresponding author: Xiaobo Qu.\\
Mengxue Zhang, Qingrui Cai, Chunyan Xiong, Yirong Zhou and Xiaobo Qu are with School of Electronic Science and Engineering (National Model Microelectronics College), Xiamen University-Neusoft Medical Magnetic Resonance Imaging Joint Research and Development Center, Fujian Provincial Key Laboratory of Plasma and Magnetic Resonance, Xiamen University, Xiamen 361102, China (e-mail:quxiaobo@xmu.edu.cn). \\
Yinyin Chen and Hang Jin are with Department of Radiology, Zhongshan Hospital (Xiamen), Fudan University, and Department of Medical Imaging, Shanghai Medical School, Fudan University, Xiamen 361015, China. \\
Jianjun Zhou is with Department of Radiology, Zhongshan Hospital (Xiamen), Fudan University, Fujian Province Key Clinical Specialty for Medical Imaging and Xiamen Key Laboratory of Clinical Transformation of Imaging Big Data and Artificial Intelligence, Xiamen 361015, China.\\ 
Qiu Guo is with Department of Radiology, Xiang’an Hospital of Xiamen University, Xiamen 361102, China.\\ 
Peijun Zhao is with Radiology Department, the First Affiliated Hospital of Xiamen University, Xiamen 361003, China. \\
Zhiping Mao is with School of Mathematical Sciences, Eastern Institute of Technology,
Ningbo 315200, China. \\
Xingxing Zhang, Yuyu Xia, Xianwang Jiang and Qin Xu are with Xiamen University-Neusoft Medical Magnetic Resonance Imaging Joint Research and Development Center, Shanghai Neusoft Medical Technology Co. Ltd, Shanghai 200241, China. \\
Chengyan Wang is with Human Phenome Institute, Fudan University, Shanghai 200241, China. }
}

\date{}
\maketitle

\begin{abstract}
Physics-Informed Neural Networks (PINN) are emerging as a promising approach for quantitative parameter estimation of Magnetic Resonance Imaging (MRI). While existing deep learning methods can provide an accurate quantitative estimation of the $T_{2}$ parameter, they still require large amounts of training data and lack theoretical support and a recognized gold standard. Thus, given the absence of PINN-based approaches for $T_{2}$ estimation, we propose embedding the fundamental physics of MRI, the Bloch equation, in the loss of PINN, which is solely based on target scan data and does not require a pre-defined training database. Furthermore, by deriving rigorous upper bounds for both the $T_{2}$ estimation error and the generalization error of the Bloch equation solution, we establish a theoretical foundation for evaluating the PINN's quantitative accuracy. Even without access to the ground truth or a gold standard, this theory enables us to estimate the error with respect to the real quantitative parameter $T_{2}$. The accuracy of $T_{2}$ mapping and the validity of the theoretical analysis are demonstrated on a numerical cardiac model and a water phantom, where our method exhibits excellent quantitative precision in the myocardial $T_{2}$ range. Clinical applicability is confirmed in 94 acute myocardial infarction (AMI) patients, achieving low-error quantitative $T_{2}$ estimation under the theoretical error bound, highlighting the robustness and potential of PINN.
\end{abstract}

Keywords: Physics-informed neural network, Parameter quantification, Error of parameter estimation, Generalization error, Bloch equation, Magnetic resonance imaging.

\section{Introduction}
\label{sec:introduction}
Quantitative magnetic resonance imaging (qMRI) can measure parameters that reflect the intrinsic characteristics of tissues \cite{zhang2022}. These quantitative parameters include $T_{1}$, $T_{2}$, $T_{2}^{*}$, and the extracellular volume fraction (ECV), which can be used to detect diffuse fibrosis \cite{yaman2020}, evaluate the degree of myocardial edema \cite{baessler2019}, quantify the tissue iron content \cite{cadour2022}, and reflect the degree of myocardial fibrosis \cite{cadour2022}, respectively. In this work, we focus on the $T_{2}$ (Fig. \ref{figure1}). It enables the non-invasive detection of myocardial edema, inflammation, and subtle tissue changes that are not apparent on conventional images, thus supporting the diagnosis of myocarditis, acute myocardial infarction, and transplant rejection \cite{obrien2022}, and may help to save the myocardium as early as possible \cite{tschope2021}.

To estimate $T_{2}$ value from magnetic resonance imaging (MRI) images, there are two common steps: First, derive a $T_{2}$ parametric signal model following the physical rule, the Bloch equation \cite{hin1983}, of MRI; second, fit the parametric signal model to obtain $T_{2}$ value \cite{fatemi2020}. The least square is the most common fitting approach \cite{poon1992}.

In the era of artificial intelligence, deep learning has been applied to improve the robustness and accuracy of $T_{2}$ estimation \cite{yang2023,cai2018,jun2024,choi2025,guo2023,qiu2024,guo2022,morales2024}. Depending on the choice of referenced $T_{2}$ values as training labels, current deep learning can be divided into 2 types: 
1) Synthetic data training \cite {yang2023, cai2018,guo2022,guo2023,morales2024}; 2) Label-free training \cite {jun2024,qiu2024,choi2025}. The former synthesizes millions of 1D $T_{2}$-weighted signals \cite{guo2023} or thousands of 2D $T_{2}$-weighted maps as training labels \cite {cai2018,yang2023}, achieving close or comparable $T_{2}$ values than the least square approach. The latter directly fits the $T_{2}$ value to the target scan-specific image or k-space through neural network, achieving better patient classification accuracy than that obtained from least square \cite{qiu2024} or faster quantification under accelerated imaging \cite{jun2024, choi2025}. Not limited to $T_{2}$, both types have the advantage of compatibility to multi-parametric quantification, e.g. $T_{1}$ and proton density \cite {yang2023, guo2023, jun2024, choi2025, qiu2024}. 

Label-free (unsupervised) methods have not been applied to cardiac $T_{2}$ mapping, as the imaging sequences, motion characteristics, and $T_{2}$ value ranges in the heart differ substantially from those in the brain, making direct comparison with these brain-based methods infeasible. Among cardiac-specific approaches, the Deepfittingnet (DFN) framework by Guo et al.\cite{guo2023} represents the most advanced cardiac-specific model trained on a large database of synthetic myocardial $T_{2}$ signal evolutions. Although synthetic data training —including DFN\cite{guo2022,guo2023,morales2024}— has been applied to up to 32 healthy subjects in one cohort, they depend on the accuracy and coverage of the synthetic training signals and the reliability on patients is still questioned \cite {yang2023}. Besides, the lack of an established clinical gold standard in qMRI makes it difficult to validate the quantitative accuracy of deep learning models. This issue is further compounded by the limited theoretical research on these models.

To avoid using the predefined training database, physics-informed neural networks (PINN) \cite{raissi2019}, \cite{karniadakis2021} explore known physical equations as prior information and embed them in the loss function of the network. PINN has shown great potential in qMRI beyond traditional deep learning methods. Van Herten et al. integrated physical models of contrast agent kinetics into PINN for myocardial perfusion quantification \cite{vanherten2023}. PINN has also been applied to estimate diffusion coefficients from MRI by modeling molecular transport in the human brain, demonstrating their utility in solving inverse problems in biomedical imaging \cite{zapf2022}. 

Theoretical analysis of PINN in qMRI has not been explored yet. PINN is essentially used to solve differential equations, particularly partial differential equations (PDE) and ordinary differential equations. As a type of deep learning model, the solution obtained by PINN would lead to errors. Recent theoretical studies have analyzed the generalization error of PINN solutions for this forward problem \cite{han2023posteriori}, \cite{mishra2023generalization_pdes}, showing that error estimates derived from discrete training points can be generalized to continuous domains. In addition, generalization bounds have also been derived for a class of inverse problems that miss boundary conditions \cite{mishra2022generalization_inverse}. However, theoretical analysis of PINN for inverse problems with unknown parameters remains lacking, both in terms of solution's generalization error and parameter estimation error.

Here, we propose a PINN-based method for $T_{2}$ mapping and derive the error bound of parameter estimation. The fundamental physics of MRI, Bloch equation, is embedded into a physics-informed loss function. By learning the Bloch equation through the network, quantitative $T_{2}$ values are estimated by directly solving the inverse problem of equations with the sampled data, rather than with a predefined database in other deep learning methods. For generalization error and parameter estimation error, our theoretical framework does not require the ground truth $T_{2}$ value or clinical gold standards, perfectly fitting for the characterization of cardiac tissue where ground truth is not available. The framework of this work and the structure of the PINN are summarized in Fig. \ref{figure1}(a) and (b).
 
A preliminary version of this study was posted as a preprint \cite{cai2023bloch}, which introduced the basic idea of applying PINN to $T_{2}$ quantification. The present manuscript provides a substantially expanded investigation, including new theoretical error bounds, a comparison with a DFN-based database method, and comprehensive quantitative, in vivo, and clinical validations.






\begin{figure}[]
  \centering
  \includegraphics[width=1\textwidth]{./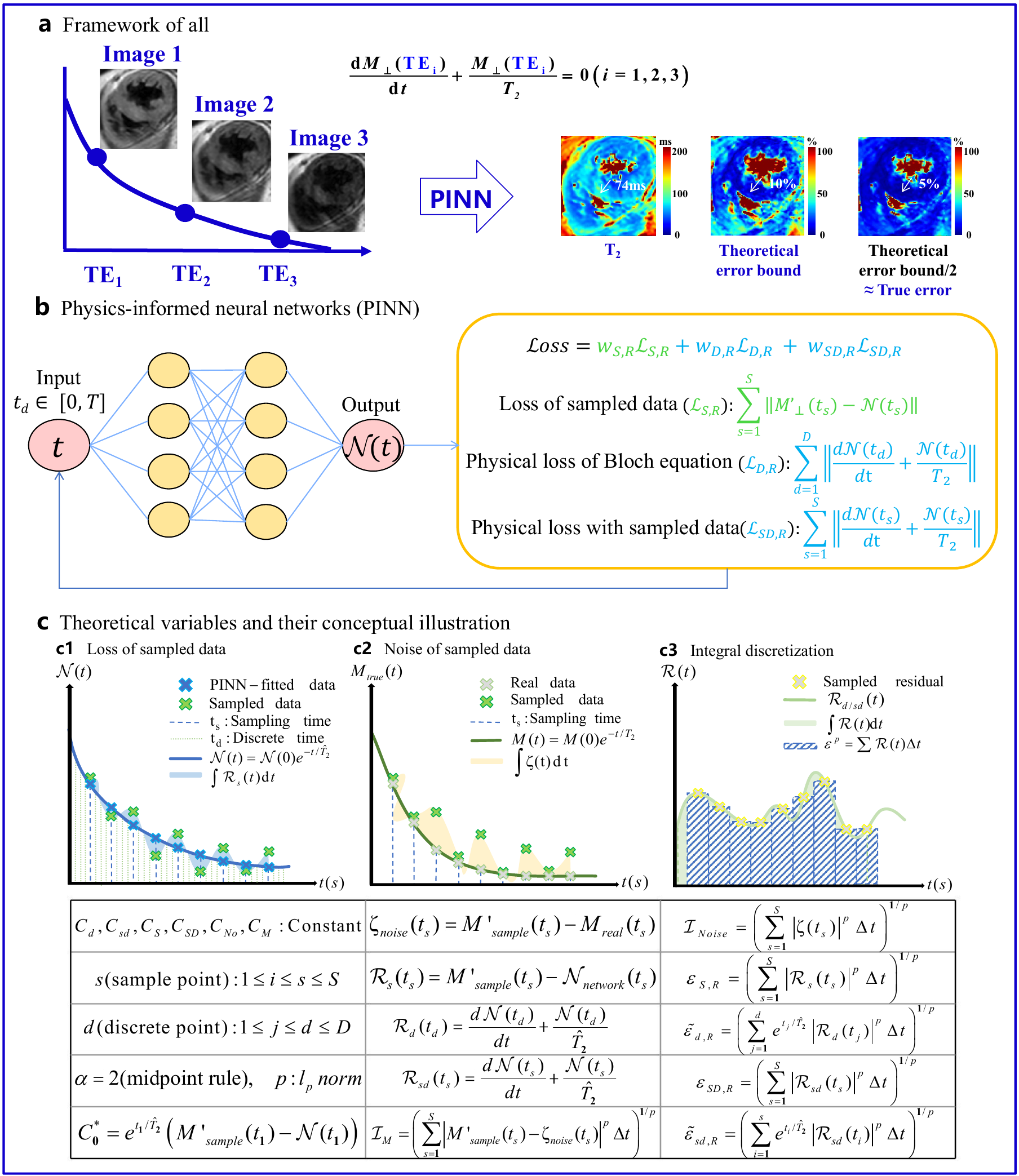}
  \caption{ Framework of the proposed cardiac $T_2$ mapping with physics-informed neural networks (PINN). (a) The framework and theoretical error bound analysis. Left: grayscale images at three echo times (TEs), where each pixel follows the Bloch equation solution $M_{\perp}(t)$ $ (t = TE_{1}, TE_{2}, TE_{3}) $ with exponential decay. PINN estimates pixel-wise $T_2$ by fitting $M_{\perp}(t)$. Middle: $T_2$ mapping estimated by PINN. Right: theoretical error bound of parameter estimation, and empirical finding shows that error bound/2 closely approximates the real error based on statistical analysis of numerical experiments in which the ground truth is available. (b) Network structure of the Bloch equation-based PINN. The input to the network consists of $d$ discrete time points uniformly sampled from the continuous domain $[0,~T]$, and the output $\mathcal{N}(t)$ is used to fit the real $M_{\perp}(t)$. The total loss function is composed of sampled data loss and physical loss. The third loss is the physics loss at the sampling time points, which is only extracted for use in the subsequent theoretical analysis. (c) Theoretical variables and their conceptual illustration. Note: The sampled data means the acquired data at each sampling time (TE). Discrete time is the time point for magnetization evolution.}
  \label{figure1}
\end{figure}

\section{Proposed Method}
In this section, we present our network structure and the underlying principles of the PINN method, along with the definition, lemma and derivation required for theoretical analysis.

\subsection{Bloch equation}
The magnetization vector  $\mathbf{M}=\left(M_{x}, M_{y}, M_{z}\right)^{T}$  in the magnetic field satisfies the Bloch equation \cite{hin1983}:
\begin{align}\label{1}
\frac{d \mathbf{M}(\mathbf{r}, t)}{d t} = & \gamma \mathbf{M}(\mathbf{r}, t) \times \mathbf{B}(\mathbf{r}, t)-\frac{M_{x}(\mathbf{r}, t) \mathbf{i}+M_{y}(\mathbf{r}, t) \mathbf{j}}{T_{2}}\\\nonumber
&-\frac{M_{z}(\mathbf{r}, t)-M_{0}}{T_{1}} \mathbf{k},
\end{align}
where $t\in[0,~T]$, $ \mathbf{B}(\mathbf{r}, t)$  is magnetic field at a spatial location  $\mathbf{r}$ while $M_{0}  $, $T_{1} $ and $T_{2} $ are tissue parameters. Due to the limitation of coil placement, in practical applications, the magnetization intensity  $M_{z}$  in the $z$-direction is difficult to be directly detected. Therefore, our target is the transverse magnetization vector  $M_{\perp}=\sqrt{M_{x}^{2}+M_{y}^{2}} $. For the spin echo and gradient echo sequence that measure the  $T_{2}$  value of tissues \cite{baessler2015}, the transverse magnetization vector  $M_{\perp}$  satisfies the Bloch equation that can be simplified as:
\begin{equation}\label{2}
\frac{d M_{\perp}(\mathbf{r}, t)}{d t}+\frac{M_{\perp}(\mathbf{r}, t)}{T_{2}}=0.
\end{equation}

\subsection{PINN for $T_{2}$ mapping}
Typical $T_2$ value of a single voxel (without variable $ \mathbf{r} $) is commonly estimated with the least square method. According to Bloch equation (\ref{2}), the signal model is:
\begin{equation}\label{3}
M_{\perp}(t)=M_{0} e^{-\frac{t}{T_{2}}},
\end{equation}
where $T_2$ and $M_0$ are parameters to be quantified. For the measured data  $M'_{\perp}\left(t_{s}\right)\in[0,~T]$ ($M'_{\perp}(t)=M_{\perp}(t)+\zeta(t)$, where $\zeta(t)$ is noise,  $t_{s}$ is a multiple time moment and $s \in\{1,2, \cdots, S\}$ in which $S $ is the number of samples (TEs). 
All $T_2$ value quantification is performed pixel-by-pixel. The least square methold also assumes that magnetization vector $M_{\perp}(t)$ is represented by the pixel intensity. Then, minimize the difference between two side of equation (\ref{3}) can obtain $T_2$ from multiple noisy samples by setting $t$ as multiple time of echo (TE). 

The proposed PINN is a neural network learning approach to approximate the Bloch Equation (\ref{2}) and obtain quantitative parameters. The network structure is a fully connected network with two hidden layers. The input of PINN is $ t $, where $ t $ is a set of $D$ discrete time points and S sample time points in $T_2$ parameter interval, and the network output is $\mathcal{N}(t)$. Then, $ \mathcal{N}(t) $ and  $M_{\perp}(t) $ should satisfy:
\begin{align}\label{4}
\frac{dM_{\perp}(t)}{dt}+\frac{M_{\perp}(t)}{T_2}=\frac{d\mathcal{N}(t)}{dt}+\frac{\mathcal{N}(t)}{T_2}=0.
\end{align}

Before presenting the loss function, we first define the residual terms required for its formulation:
\begin{align}\label{9}
\mathcal{R}_{d}(t)&:=\frac{d \mathcal{N}(t)}{d t}+\frac{\mathcal{N}(t)}{\hat{T}_{2}},\\\label{10}
\mathcal{R}_{s}(t)&:= \mathcal{N}(t) - M'_{\perp}(t),
\end{align}
where $\hat{T}_2$ is the estimation of $T_2$ by PINN. 
Both $\mathcal{N}(t)$ and $\hat{T}_{2}$ present errors than the real magnetization vector $M_{\perp}(t)$ and true $T_{2}$, respectively. Our objective is to minimize two residuals in Equations (\ref{9}) and (\ref{10}) in the loss function of PINN over time $t$. Accordingly, the first part is a physics-informed loss for the Bloch Equation (\ref{4}) as follows:
\begin{align}\label{5}
\mathcal{L}_{D,R} = \frac{1}{D}\sum^D_{d=1}\left\|\mathcal{R}_{d}(t_d)\right\|,
\end{align}
where $\|\cdot\|$ is the $l_2$ norm, $d\in\{1,2,\cdots,D\}$ is the number of discrete input time points $t$. The second part is the loss between the network output and the measured realistic cardiac MRI signal as follows:
\begin{align}\label{6}
\mathcal{L}_{S,R} = \frac{1}{S}\sum^S_{s=1}\|\mathcal{R}_{s}(t_s)\|.
\end{align}\\

Finally, the total loss function is a weighted sum of two parts:
\begin{align}\label{7}
\mathcal{L}=w_{D,R} \mathcal{L}_{D,R}+w_{S,R} \mathcal{L}_{S,R},
\end{align}
where $w_{S,R}$ and $w_{D,R}$ are two weights to balance importance of two terms.

\subsection{Theoretical analysis}
After quantitative estimation using PINN, the error bound theory can be derived by analyzing the relationship between the predicted parameters and the residuals. Besides using the standard residual expressions common to PINN (Equations (\ref{9}) and (\ref{10})), one needs to define a physical residual evaluated specifically at the sampling points $t$ in Equation (\ref{9}) as follows:
\begin{definition}[Physical residual with sample data]\label{def5}
Put the estimated solution $\mathcal{N}(t_{s})$ of sampled data (TEs) into the Bloch equation under $\hat{T}_{2} $ estimation, then the new physical parameter residual $\mathcal{R}_{sd}$ is
\begin{align}\label{13}
\mathcal{R}_{sd}(t_{s}):=\frac{d \mathcal{N}(t_{s})}{d t}+\frac{\mathcal{N}(t_{s})}{\hat{T}_{2}  },
\end{align}
where $s \in\{1,2, \cdots, S\}$, $S $ is the number of sampled data.
\end{definition}
Because the domain $[0,~T]$ of the Bloch equation is continuous along the t-axis of the coordinate system (Fig. \ref{figure1}(c)), the integrals of three residuals are:
\begin{align}\label{20}
&\left \| \mathcal{R}_{d}(t) \right \| _{p}^{p} =\int_{0}^{T}\left|\mathcal{R}_{d}(t)\right|^{p} \mathrm{d} t,\\
&\left \| \mathcal{R}_{s}(t) \right \| _{p}^{p} = \int_{0}^{T}\left|\mathcal{R}_{s}(t)\right|^{p} \mathrm{d} t ,\\
&\left \| \mathcal{R}_{sd}(t) \right \| _{p}^{p} = \int_{0}^{T}\left|\mathcal{R}_{sd}(t)\right|^{p} \mathrm{d} t,
\end{align}
where $||.||^p_p$ denotes the $l_{p}$ norm. 

For practical training on the PINN network, the continuity of the integration process $t$ is simulated at a discrete set of points in the time domain $[0,~T]$. Therefore, there exists an integration error when discretizing the integration. The discretizing process is illustrated in Fig. \ref{figure1}(c) and the integration error over all sampled points are presented in the following lemma.

\begin{Lemma}[Integration Error]\cite{drag1998}\label{lemma1}
For the residuals $\mathcal{R}_{d}(t),~\mathcal{R}_{s}(t)$ and $\mathcal{R}_{sd}(t)$ which we define, there exists constants $C_{D}$, $C_{S} ,~C_{SD} > 0$, such that their integration errors in the $l_{p}$ norm are bounded as
\begin{align}
&~~\left|\int_{0}^{T} \left|\mathcal{R}_{d}(t)\right|^{p} \mathrm{d} t-\sum_{d=1}^{D} w_{d}\left|\mathcal{R}_{d}\left(t_{d}\right)\right|^{p}\right| \leqslant C_{D} D^{-\alpha},\label{23}\\
&~~\left|\int_{0}^{T} \left|\mathcal{R}_{s}(t)\right|^{p} \mathrm{d} t-\sum_{s=1}^{S} w_{s}\left|\mathcal{R}_{s}\left(t_{s}\right)\right|^{p}\right| \leqslant C_{S} S^{-\alpha},\label{24}\\
&\left|\int_{0}^{T} \left|\mathcal{R}_{sd}(t)\right|^{p} \mathrm{d} t-\sum_{s=1}^{S} w_{s}\left|\mathcal{R}_{sd}\left(t_{s}\right)\right|^{p}\right| \leqslant C_{SD} S^{-\alpha},\label{25}
\end{align}
where 
\begin{align}
C_{D} = \left\{\begin{array}{ll}
\frac{T^{3}}{24D}\sum_{d=1}^{D}\left\|\mathcal{R}_{d}^{\prime \prime}\right\|_{\infty}, ~~~~~~~\text { if } f^{\prime \prime} \in L_{\infty}(x_{d-1}, x_{d}),  \nonumber\\
\frac{T^{2+\frac{1}{p}}}{8(2 pD +D)^{\frac{1}{p}}}\sum_{d=1}^{D}\left\|\mathcal{R}_{d}^{\prime \prime}\right\|_{q},  \text { if } f^{\prime \prime} \in L_{q}(x_{d-1}, x_{d}),\\
\frac{T^{2}}{8}\sum_{d=1}^{D}\left\|\mathcal{R}_{d}^{\prime \prime}\right\|_{1},  ~~~~~~~~~~\text { if } f^{\prime \prime} \in L_{1}(x_{d-1}, x_{d}),
\end{array}\right.
\end{align}
$C_{S}$ and $C_{SD}$ are same as $C_{D}$, $w_{d},~w_{s}$ are the discretization time steps, $\alpha=2$ when using the midpoint rule to approximate the integral, and $\alpha=3$ when using the trapezoidal rule. The second derivative $\mathcal{R}_{d}^{\prime \prime}$ of $\mathcal{R}_{d}$ is obtained via a central finite-difference numerical differentiation method.
\end{Lemma}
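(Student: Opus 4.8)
The three inequalities (\ref{23})--(\ref{25}) are structurally identical: each bounds a quadrature (numerical integration) error for an integrand of the form $|\mathcal{R}(t)|^{p}$, with the exact integral over $[0,~T]$ on one side and its composite midpoint/trapezoidal approximation $\sum_d w_d|\mathcal{R}(t_d)|^p$ on the other. The plan is therefore to prove the estimate once for $\mathcal{R}_d$ and then observe that the arguments for $\mathcal{R}_s$ and $\mathcal{R}_{sd}$ are verbatim identical, with $D$ replaced by $S$ and the constant $C_D$ replaced by $C_S$ or $C_{SD}$. This reduces the lemma to a single classical composite-rule error bound, which is exactly the content of the cited quadrature result \cite{drag1998}; the substance of the proof is then verifying that the hypotheses of that result are met by our residuals.

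First I would set $f(t):=|\mathcal{R}_d(t)|^p$ and localize: partition $[0,~T]$ into the $D$ subintervals $[x_{d-1},x_d]$ used by the quadrature rule, each of width $w_d\sim T/D$. On each subinterval the local quadrature error admits a Peano-kernel representation $\int_{x_{d-1}}^{x_d}K_d(t)\,f''(t)\,\mathrm{d}t$, where $K_d$ is the compactly supported Peano kernel of the midpoint or trapezoidal rule. The three cases in the definition of $C_D$ then arise from bounding this local integral by H\"older's inequality with three different choices of norm on $f''$: the pairing $\|K_d\|_1\|f''\|_\infty\sim w_d^3\|f''\|_\infty$ gives the $L_\infty$ row, the intermediate pairing $\|K_d\|_{q'}\|f''\|_q$ gives the $L_q$ row, and $\|K_d\|_\infty\|f''\|_1\sim w_d^2\|f''\|_1$ gives the $L_1$ row. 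Summing the local errors over the $D$ subintervals and inserting $w_d\sim T/D$ produces the prefactors $T^3/(24D)$, $T^{2+1/p}/(8(2pD+D)^{1/p})$, and $T^2/8$ together with the overall decay $D^{-\alpha}$, with $\alpha$ determined by which rule is used.

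The main obstacle is that the integrand is $f=|\mathcal{R}_d|^p$ rather than $\mathcal{R}_d$ itself, whereas the stated constants are expressed through $\|\mathcal{R}_d''\|$. By the chain rule $f''$ involves $\mathcal{R}_d$, $\mathcal{R}_d'$ and $\mathcal{R}_d''$, and for $1<p<2$ the map $u\mapsto|u|^p$ fails to be twice differentiable at the zeros of $\mathcal{R}_d$, so some care is needed to guarantee $f''\in L_\infty$, $L_q$, or $L_1$. I would resolve this by appealing to the smoothness of the network: since $\mathcal{N}(t)$ is a finite composition of smooth activations, $\mathcal{R}_d=\mathcal{N}'+\mathcal{N}/\hat{T}_{2}$ and its derivatives are bounded on the compact interval $[0,~T]$, which lets me absorb the $\mathcal{R}_d$- and $\mathcal{R}_d'$-dependent factors into the constant and control $f''$ by $\|\mathcal{R}_d''\|$ in the appropriate norm. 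With this regularity in hand, the remaining computation of $\|K_d\|$ for the midpoint and trapezoidal kernels is routine, and the claimed exponents $\alpha$ follow directly from the cited theorem \cite{drag1998}.
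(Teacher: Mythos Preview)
The paper does not give its own proof of this lemma: it is stated as a direct citation of the quadrature result in \cite{drag1998}, with the residuals $|\mathcal{R}_d|^p$, $|\mathcal{R}_s|^p$, $|\mathcal{R}_{sd}|^p$ slotted in as the integrands. Your Peano-kernel / H\"older argument is exactly the method underlying that reference, so in substance you are reproducing what the paper simply imports; there is no alternative ``paper proof'' to compare against.

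Where your write-up goes beyond the paper is in flagging the mismatch between the integrand $f=|\mathcal{R}_d|^p$ and the constants, which are written in terms of $\|\mathcal{R}_d''\|$ rather than $\|f''\|$. You are right that the classical bound controls the quadrature error by the second derivative of the \emph{integrand}, so the constants as displayed really should involve $(|\mathcal{R}_d|^p)''$. Your proposed fix --- use smoothness of $\mathcal{N}$ on $[0,T]$ to dominate $f''$ by quantities depending on $\mathcal{R}_d,\mathcal{R}_d',\mathcal{R}_d''$ and absorb the extra factors into the constant --- is the correct instinct, but it does not recover the \emph{specific} prefactors $T^3/(24D)$, $T^{2+1/p}/(8(2pD+D)^{1/p})$, $T^2/8$ with $\|\mathcal{R}_d''\|$ alone; those numerical constants are the ones from \cite{drag1998} applied to a generic $f$, and the paper is effectively identifying $\mathcal{R}_d''$ with $f''$ notationally (note also the last sentence of the lemma, which computes this ``second derivative'' numerically, consistent with treating the whole integrand as the object being differentiated). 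So the gap you detect is a genuine notational looseness in the lemma statement rather than a flaw in your argument, and the honest resolution is to read $\mathcal{R}_d''$ in the constant as shorthand for the second derivative of the actual integrand.
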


\begin{definition}[Training Error]\label{def9}
For the the residuals $\mathcal{R}_{d}(t)$, $\mathcal{R}_{s}(t)$ and $\mathcal{R}_{sd}(t)$ of training PINN, we can define the training errors as follows:
\begin{align}
\mathcal{E}_{D, R}:=\left(\sum_{d=1}^{D} w_{d}\left|\mathcal{R}_{d}\left(t_{d}\right)\right|^{p}\right)^{\frac{1}{p}}, 
\end{align}
where $\mathcal{E}_{S, R}$ and $\mathcal{E}_{SD, R}$ are same as $\mathcal{E}_{D, R}$, and $w_{d},~w_{s}$ are the discrete time step, and when the step intervals are equal, $\bigtriangleup t = w_{s} = t_{s} -t_{s-1},~1\le s\le S$ (or $\bigtriangleup t = w_{d} = t_{d} -t_{d-1},~1\le d\le D$).
\end{definition}

Similarly, according to Lemma 1, for discrete magnetic intensity $M_{\perp}(t_{s})$, the integration of the noiseless data $M_{\perp}(t_{s})$ can be bounded under the $C_{M}> 0$ as
\begin{align}
\left|\int_{0}^{T} \left|M_{\perp}(t)\right|^{p} \mathrm{d} t-\sum_{s=1}^{S} w_{s}\left|M_{\perp}\left(t_{s}\right)\right|^{p}\right| \leqslant C_{M} S^{-\alpha},\label{}
\end{align}
and we then let
\begin{align}
\mathcal{I}_{M}:=\left(\sum_{s=1}^{S} w_{s}\left|M_{\perp}\left(t_{s}\right)\right|^{p}\right)^{\frac{1}{p}}.
\end{align}

\begin{definition}[Noisy Data]\label{def11}
If the noisy data $M'_{\perp}(t)$ and the noiseless data $M_{\perp}(t)$ have a noise $\zeta(t)$ such that
\begin{align}\label{51}
M'_{\perp}(t)=M_{\perp}(t)+\zeta(t),
\end{align}
the definition and the assumption are equally applicable for
\begin{align}\label{}
\mathcal{R}_{s}(t):=\mathcal{N}(t) - M'_{\perp}(t).\label{52}
\end{align}
And let
\begin{align}
\mathcal{I}_{No}:=\left(\sum_{s=1}^{S} w_{s}\left|\zeta\left(t_{s}\right)\right|^{p}\right)^{\frac{1}{p}}.
\end{align}
\end{definition}

With these established definitions and lemmas, we can derive the error bound theory for PINN-based quantitative estimation. All $T_2$ value quantification is performed pixel-by-pixel. The following theory assume that magnetization vector $M_{\perp}(t)(M_{real}(t))$ is represented by the real pixel intensity. Definitions of main variables are shown in Fig. \ref{figure1}(c). The loss of sampled data (Fig. \ref{figure1}(c1)) is defined as the sum of residuals between the experimentally sampled data (sampled pixel intensity $M'_{\perp}(t)$ ($M'_{sample}(t)$)) and PINN-fitted magnetization vector $\mathcal{N}(t)$ for all sampling time $t_s (s=1,2....,S)$. Fig. \ref{figure1}(c2) presents the noise $\zeta(t)$ of sampled data, which satisfies $M_{\perp}(t) = M'_{\perp}(t)-\zeta(t)$. The sum of the noise $\zeta(t)$ is denoted as $\mathcal{I}_{No} $ $(\mathcal{I}_{Noise})$.

Theorem \ref{thm2} provides the theoretical upper bound of estimation error $|T_{2} - \hat{T}_{2}|/T_{2}$ for noisy case, where $T_2$ and $\hat{T}_{2}$ is the ground-truth and estimated parameter, respectively. 

\begin{theorem}[Error bound of $T_{2}$ estimation on noisy data]\label{thm2}
Let $\hat{T}_{2} $ be the PINN estimation of the ground truth value $T_{2}$, $\left \{ M'_{\perp}(t_{s})  \right \}_{s=1}^{S}$ be the noisy sampled data of Bloch equation, and the training set  be $\mathcal{T}_{sample} =\left \{ t_{s}  \right \}_{s=1}^{S} $, then the estimate error of $T_2$ satisfies:
\begin{align}\label{2.1}
 \frac{|T_{2}-\hat{T}_{2}|}{T_{2}  } \le & \frac{1}{\mathcal{I}_{M} - C_{M}^{\frac{1}{p}} S^{-\frac{\alpha}{p}}}\Big(\mathcal{E}_{S, R}  +\hat{T}_{2}\mathcal{E}_{SD, R} + \mathcal{I}_{No}   \nonumber\\
&+ \hat{T}_{2} \left | \bigtriangleup  \mathcal{R}_{s}(t_{s})\right | + \hat{T}_{2} \left | \bigtriangleup  \zeta(t_{s})\right | +C_{S}^{\frac{1}{p}} S^{-\frac{\alpha}{p}} \nonumber\\
& +C_{No}^{\frac{1}{p}} S^{-\frac{\alpha}{p}} 
 + \hat{T}_{2}C_{SD}^{\frac{1}{p}} S^{-\frac{\alpha}{p}}\Big),
\end{align}
where $\left | \bigtriangleup  \mathcal{R}_{s}(t_{s})\right | = \left |\max\limits_{t_{s}} \mathcal{R}_{s}(t_{s})-\min\limits_{t_{s}} \mathcal{R}_{s}(t_{s})\right |, \left | \bigtriangleup  \zeta(t_{s})\right | = \left |\max\limits_{t_{s}} \zeta(t_{s})-\min\limits_{t_{s}} \zeta(t_{s})\right |,t_{s}\in [0,T].$
\end{theorem}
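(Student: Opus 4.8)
The plan is to reduce everything to a single pointwise algebraic identity that isolates the quantity $\frac{1}{\hat T_2}-\frac{1}{T_2}$, and then pass from that identity to norms using Lemma~\ref{lemma1}. First I would write the PINN output as a perturbation of the true magnetization, namely $\mathcal{N}(t)=M'_\perp(t)+\mathcal{R}_s(t)=M_\perp(t)+\zeta(t)+\mathcal{R}_s(t)$, using Definitions~\ref{def5} and~\ref{def11}. Substituting this decomposition into the sampled physical residual $\mathcal{R}_{sd}(t)=\frac{d\mathcal{N}(t)}{dt}+\mathcal{N}(t)/\hat T_2$ and invoking the exact Bloch relation $\frac{dM_\perp(t)}{dt}=-M_\perp(t)/T_2$ from Equation~(\ref{2}), the terms involving $dM_\perp/dt$ cancel and I can collect the remaining $M_\perp$ terms to obtain
$$M_\perp(t)\left(\frac{1}{\hat T_2}-\frac{1}{T_2}\right)=\mathcal{R}_{sd}(t)-\frac{d}{dt}\big[\zeta(t)+\mathcal{R}_s(t)\big]-\frac{1}{\hat T_2}\big[\zeta(t)+\mathcal{R}_s(t)\big].$$
Since $\frac{T_2-\hat T_2}{T_2}=\hat T_2\big(\frac{1}{\hat T_2}-\frac{1}{T_2}\big)$, this identity already exposes the target ratio, and the whole theorem becomes a matter of bounding the right-hand side while lower-bounding the coefficient $M_\perp(t)$ on the left.

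Second, I would take the $l_p$ norm over $[0,T]$ on both sides. As $\frac{1}{\hat T_2}-\frac{1}{T_2}$ is a constant it factors out, yielding $\big|\frac{1}{\hat T_2}-\frac{1}{T_2}\big|\,\|M_\perp\|_p$ on the left, while Minkowski's inequality splits the right-hand side into $\|\mathcal{R}_{sd}\|_p$, the derivative term, and $\frac{1}{\hat T_2}\big(\|\zeta\|_p+\|\mathcal{R}_s\|_p\big)$. Multiplying through by $\hat T_2$ and dividing by $\|M_\perp\|_p$ produces exactly the skeleton of~(\ref{2.1}). Each continuous norm is then replaced by its discrete training counterpart via Lemma~\ref{lemma1}: applying the subadditivity $(a+b)^{1/p}\le a^{1/p}+b^{1/p}$ to the integration-error bounds~(\ref{24}),~(\ref{25}) and the analogous bound for $\zeta$ (with constant $C_{No}$) converts $\|\mathcal{R}_s\|_p,\|\mathcal{R}_{sd}\|_p,\|\zeta\|_p$ into $\mathcal{E}_{S,R}+C_S^{1/p}S^{-\alpha/p}$, $\mathcal{E}_{SD,R}+C_{SD}^{1/p}S^{-\alpha/p}$ and $\mathcal{I}_{No}+C_{No}^{1/p}S^{-\alpha/p}$ respectively. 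The same subadditivity applied in reverse to the $M_\perp$ integration-error bound gives the denominator lower bound $\|M_\perp\|_p\ge \mathcal{I}_M-C_M^{1/p}S^{-\alpha/p}$; here I would also use the positivity $M_\perp(t)=M_0e^{-t/T_2}>0$ from~(\ref{3}) to guarantee the denominator is genuinely bounded away from zero once $S$ is large enough that $\mathcal{I}_M>C_M^{1/p}S^{-\alpha/p}$.

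The delicate step, and the one I expect to be the main obstacle, is the derivative contribution $\hat T_2\,\big\|\tfrac{d}{dt}[\zeta+\mathcal{R}_s]\big\|_p$, which must be controlled by the ranges $\hat T_2\big(|\bigtriangleup\mathcal{R}_s(t_s)|+|\bigtriangleup\zeta(t_s)|\big)$ rather than by a derivative norm. A naive $l_p$ estimate of $\|\mathcal{R}_s'\|_p$ would produce the total variation, which points the wrong way, so instead I would keep this term as an \emph{exact} differential and integrate it first: by the fundamental theorem of calculus $\int_0^T \frac{d}{dt}\mathcal{R}_s(t)\,dt=\mathcal{R}_s(T)-\mathcal{R}_s(0)$, whose absolute value is at most $\max_{t}\mathcal{R}_s-\min_{t}\mathcal{R}_s=|\bigtriangleup\mathcal{R}_s|$ because both endpoints lie in that interval, and identically for $\zeta$. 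Concretely this amounts to handling the derivative term by the signed (telescoping) discretization of the central finite difference underlying Lemma~\ref{lemma1}, rather than by Minkowski, which is the single place the argument departs from a uniform norm treatment. Collecting the converted data, physics, and noise norms together with these two boundary/range terms and the four $S^{-\alpha/p}$ integration-error corrections over the lower bound on $\|M_\perp\|_p$ then yields precisely~(\ref{2.1}).
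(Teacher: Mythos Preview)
Your proposal is correct and follows essentially the same route as the paper: you derive the same pointwise identity isolating $(\tfrac{1}{\hat T_2}-\tfrac{1}{T_2})M_\perp(t)$ from the decomposition $\mathcal{N}-M_\perp=\mathcal{R}_s+\zeta$, handle the derivative contribution via the fundamental theorem of calculus and the max--min bound exactly as the paper does in~(\ref{43})--(\ref{66}), and then convert the remaining continuous norms to training errors plus $S^{-\alpha/p}$ corrections using Lemma~\ref{lemma1} together with the subadditivity $(a^p+b^p)^{1/p}\le a+b$. The only cosmetic difference is that the paper integrates the identity first and then passes to $l_p$ norms in~(\ref{65}), whereas you take norms term-by-term from the outset and single out the derivative for exact integration; the resulting bound and the ingredients used are identical.
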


\begin{proof}
~~According to (4) and (10), we know that
\begin{align}
&\frac{d M_{\perp}(t)}{d t}+\frac{M_{\perp}(t)}{T_{2}}=0,\label{33}\\ \label{A11}
&\frac{d \mathcal{N}(t)}{d t}+\frac{\mathcal{N}(t)}{\hat{T}_{2} }=\mathcal{R}_{sd}(t).
\end{align}
~~We subtract equation (\ref{33}) from equation (\ref{A11}), then we have
\begin{align}\label{}
\frac{d [\mathcal{N}(t)-M_{\perp}(t)]}{d t}+\frac{\mathcal{N}(t)}{\hat{T}_{2} } - \frac{M_{\perp}(t)}{T_{2} } =  \mathcal{R}_{sd}(t),
\end{align}
adding and subtracting $\frac{M_{\perp}(t)}{T_{2}}$ on the left,
\begin{align}\label{}
&\frac{d [\mathcal{N}(t)-M_{\perp}(t)]}{d t}+\frac{[\mathcal{N}(t)-M_{\perp}(t)]}{\hat{T}_{2} } - \frac{M_{\perp}(t)}{\hat{T}_{2}} + \frac{M_{\perp}(t)}{T_{2}} \nonumber\\
= & \mathcal{R}_{sd}(t).
\end{align}
so,
\begin{align}\label{36}
&\frac{d [\mathcal{N}(t)-M_{\perp}(t)]}{d t}+\frac{[\mathcal{N}(t)-M_{\perp}(t)]}{\hat{T}_{2} }\nonumber\\
=  & \mathcal{R}_{sd}(t) + \frac{M_{\perp}(t)}{T_{2}} - \frac{M_{\perp}(t)}{\hat{T}_{2}}.
\end{align}
~~Next, from equation (\ref{36}),
\begin{align}\label{37}
\frac{M_{\perp}(t)}{T_{2}} - \frac{M_{\perp}(t)}{\hat{T}_{2}} = M_{\perp}(t)\left ( \frac{1}{T_{2}} - \frac{1}{\hat{T}_{2}} \right ) \nonumber\\
=\left ( \frac{\hat{T}_{2}-T_{2}}{T_{2}\hat{T}_{2}}  \right )M_{\perp}(t).
\end{align}
~~According to (\ref{36}) and (\ref{37}), we know that
\begin{align}\label{58}
&\frac{d [\mathcal{N}(t)-M_{\perp}(t)]}{d t}+\frac{[\mathcal{N}(t)-M_{\perp}(t)]}{\hat{T}_{2} } \nonumber\\
= & \mathcal{R}_{sd}(t) + \frac{M_{\perp}(t)}{T_{2}} - \frac{M_{\perp}(t)}{\hat{T}_{2}}\nonumber\\
=  & \mathcal{R}_{sd}(t) + \left ( \frac{\hat{T}_{2}-T_{2}}{T_{2}\hat{T}_{2}}  \right )M_{\perp}(t),
\end{align}
and through (\ref{51}) and (\ref{52})
\begin{align}\label{}
\mathcal{R}_{s}(t) = \mathcal{N}(t) - M'_{\perp}(t)= \mathcal{N}(t) - (M_{\perp}(t)+\zeta(t)),
\end{align}
we have that
\begin{align}\label{}
&\mathcal{N}(t) - M_{\perp}(t) = \mathcal{R}_{s}(t) + \zeta(t),\\
&M_{\perp}(t)=M'_{\perp}(t) - \zeta(t),
\end{align}
so,
\begin{align}\label{62}
 & \frac{d [\mathcal{N}(t)-M_{\perp}(t)]}{d t}+\frac{[\mathcal{N}(t)-M_{\perp}(t)]}{\hat{T}_{2} } \nonumber\\
  = & \frac{d \mathcal{R}_{s}(t)}{d t}+\frac{\mathcal{R}_{s}(t)}
    {\hat{T}_{2} } + \frac{d \zeta(t)}{d t}+\frac{\zeta(t)}{\hat{T}_{2} }.
\end{align}
~~Then, combining (\ref{58}) and (\ref{62}), it would be
  \begin{align}\label{}
& \frac{d \mathcal{R}_{s}(t)}{d t}+\frac{\mathcal{R}_{s}(t)}
    {\hat{T}_{2} } + \frac{d \zeta(t)}{d t}+\frac{\zeta(t)}{\hat{T}_{2} } \nonumber\\
    = &\mathcal{R}_{sd}(t) + \left ( \frac{\hat{T}_{2}-T_{2}}{T_{2}\hat{T}_{2}}  \right )M_{\perp}(t) \nonumber\\
 =  & \mathcal{R}_{sd}(t) + \left ( \frac{\hat{T}_{2}-T_{2}}{T_{2}\hat{T}_{2}}  \right )(M'_{\perp}(t) - \zeta(t)).
\end{align}
~~Then, we can obtain
\begin{align}\label{}
&\left ( \frac{\hat{T}_{2}-T_{2}}{T_{2}\hat{T}_{2}}  \right )(M'_{\perp}(t) - \zeta(t)) \nonumber\\
= & \frac{d \mathcal{R}_{s}(t)}{d t}+\frac{\mathcal{R}_{s}(t)}{\hat{T}_{2} } + \frac{d \zeta(t)}{d t} +\frac{\zeta(t)}{\hat{T}_{2} } - \mathcal{R}_{sd}(t).
\end{align}
~~Next, integrating both sides of the equation
\begin{align}\label{}
&\left ( \frac{\hat{T}_{2}-T_{2}}{T_{2}\hat{T}_{2}}  \right )\left ( \int_{0}^{T}M'_{\perp}(t)\mathrm{d} t - \int_{0}^{T}\zeta(t)\mathrm{d} t \right )\nonumber\\
=& \int_{0}^{T}\frac{d \mathcal{R}_{s}(t)}{d t} \mathrm{d} t + \int_{0}^{T}\frac{\mathcal{R}_{s}(t)}{\hat{T}_{2} }\mathrm{d} t + \int_{0}^{T}\frac{d \zeta(t)}{d t} \mathrm{d} t \nonumber\\ 
&+ \int_{0}^{T}\frac{\zeta(t)}{\hat{T}_{2} }\mathrm{d} t - \int_{0}^{T}\mathcal{R}_{sd}(t)\mathrm{d} t\nonumber\\
=& \left[ \mathcal{R}_{s}(t) \right ]\big| _{0}^{T} + \int_{0}^{T}\frac{\mathcal{R}_{s}(t)}{\hat{T}_{2} }\mathrm{d} t + \left[ \zeta(t) \right ]\big| _{0}^{T} + \int_{0}^{T}\frac{\zeta(t)}{\hat{T}_{2} }\mathrm{d} t \nonumber\\
& - \int_{0}^{T}\mathcal{R}_{sd}(t)\mathrm{d} t,
\end{align}
so
\begin{align}\label{65}
&\left | \frac{\hat{T}_{2}-T_{2}}{T_{2}}  \right |   
\le   \frac{1}{\left (   \int_{0}^{T}\left | M'_{\perp}(t) -\zeta(t) \right |^{p} \mathrm{d} t\right )^{\frac{1}{p} }  } \Bigg(   \hat{T}_{2} \left |\left[ \mathcal{R}_{s}(t) \right ] \big|_{0}^{T}\right |   \nonumber\\
&   + \left (   \int_{0}^{T}\left |\mathcal{R}_{s}(t) \right |^{p} \mathrm{d} t\right )^{\frac{1}{p} }  +  \hat{T}_{2} \left |\left[ \zeta(t) \right ]\big| _{0}^{T}\right | + \left (   \int_{0}^{T}\left |\zeta(t) \right |^{p} \mathrm{d} t\right )^{\frac{1}{p} } \nonumber\\
&  +  \hat{T}_{2} \left (   \int_{0}^{T}\left |\mathcal{R}_{sd}(t) \right |^{p} \mathrm{d}t\right )^{\frac{1}{p} } \Bigg).
\end{align}\\

Due to $t_{s}\in  \left [ 0,T \right ] $ for $1\le s\le S$ such that $[t_{1},t_{S}]\subseteq \left [ 0,T \right ] $, so we have
\begin{align}\label{43}
\left |\left[ \mathcal{R}_{s}(t) \right ] _{0}^{T}\right |  = &\left |\mathcal{R}_{s}(T)- \mathcal{R}_{s}(0)\right |  \\ \nonumber
\le & \left |\max_{t_{s}\in [0,T]} \mathcal{R}_{s}(t_{s})-\min_{t_{s}\in [0,T]} \mathcal{R}_{s}(t_{s})\right | = \left | \bigtriangleup  \mathcal{R}_{s}(t_{s})\right |.
\end{align}
and
\begin{align}\label{66}
\left |\left[ \zeta(t) \right ]\big| _{0}^{T}\right | & = \left |\zeta(T)- \zeta(0)\right |  \\  \nonumber
&\le  \left |\max_{t_{s}\in [0,T]} \zeta(t_{s})-\min_{t_{s}\in [0,T]} \zeta(t_{s})\right | = \left | \bigtriangleup  \zeta(t_{s})\right |,
\end{align}

On the $l^{p} $ norm, $p$ always satisfies $p\ge 1$. Thus, by applying Jensen's inequality, which is
\begin{align}
\left ( a^{p}  +b^{p} \right ) ^{\frac{1}{p} } \le a+b,~p\ge 1,
\end{align}
and using the result of (18) and (19) that
\begin{align}\label{A20}
&\int_{0}^{T}\left |\mathcal{R}_{s}(t) \right |^{p} \mathrm{d} t \le \mathcal{E}_{S, R}^{p} +C_{S}S^{-\alpha},\\
&\int_{0}^{T}\left |\mathcal{R}_{sd}(t) \right |^{p} \mathrm{d} t \le \mathcal{E}_{SD, R}^{p} +C_{SD}S^{-\alpha},
\end{align}
we have
\begin{align}\label{}
& \left (   \int_{0}^{T}\left |\mathcal{R}_{s}(t) \right |^{p} \mathrm{d} t\right )^{\frac{1}{p} }  \le \mathcal{E}_{S, R} +C_{S}^{\frac{1}{p}} S^{-\frac{\alpha}{p}}, \label{47}\\
& \left (   \int_{0}^{T}\left |\mathcal{R}_{sd}(t) \right |^{p} \mathrm{d} t\right )^{\frac{1}{p} }  \le \mathcal{E}_{SD, R} +C_{SD}^{\frac{1}{p}} S^{-\frac{\alpha}{p}}.\label{48}.
\end{align}
~~Similarly,
\begin{align}\label{A55}
\left (   \int_{0}^{T}\left |\zeta(x) \right |^{p} \mathrm{d} x\right )^{\frac{1}{p} }  &\le \mathcal{I}_{No, R} +C_{No}^{\frac{1}{p}} S^{-\frac{\alpha}{p}},\\ \label{A56}
\left (   \int_{0}^{T}\left |M'_{\perp}(t) - \zeta(t) \right |^{p} \mathrm{d} t\right )^{\frac{1}{p} } & =\left (   \int_{0}^{T}\left |M_{\perp}(t) \right |^{p} \mathrm{d} t\right )^{\frac{1}{p} }\nonumber\\
&\ge \mathcal{I}_{M} - C_{M}^{\frac{1}{p}} S^{-\frac{\alpha}{p}}.
\end{align}
~~Using (\ref{43}), (\ref{66}), (\ref{47}), (\ref{48}), (\ref{A55}) and (\ref{A56}), we can finally draw the conclusion from (\ref{65})
\begin{align}\label{72}
 \left| \frac{T_{2}-\hat{T}_{2}  }{\hat{T}_{2}  }  \right| 
 \le & \frac{1}{\mathcal{I}_{M} - C_{M}^{\frac{1}{p}} S^{-\frac{\alpha}{p}}}\Big(\mathcal{E}_{S, R}  +\hat{T}_{2}\mathcal{E}_{SD, R} +\mathcal{I}_{No} \nonumber\\
& + \hat{T}_{2}\left | \bigtriangleup  \mathcal{R}_{s}(t_{s})\right | + \hat{T}_{2}\left | \bigtriangleup  \zeta(t_{s})\right | + C_{S}^{\frac{1}{p}} S^{-\frac{\alpha}{p}}  \nonumber\\
& + C_{No}^{\frac{1}{p}} S^{-\frac{\alpha}{p}} + \hat{T}_{2}C_{SD}^{\frac{1}{p}} S^{-\frac{\alpha}{p}}\Big).
\end{align}
\end{proof}

This theorem points out the maximal error could be reached by PINN and has important interpretations:

1) Error bound does not depend on the ground-truth $T_{2}$ value but depends on the estimated $\hat{T}_{2}$, providing an accessible way to evaluate the maximal error without gold standard or ground truth. The reason is $T_{2}$ is not presented in the right side of the inequality. In addition, an increasing $\hat{T}_{2}$ will slow signal decay in the numerator, but also will enlarge $M_{\perp}(t)$ and its integral $\mathcal{I}_{M}$ in the denominator, so it may lead to a partial compensation of their contributions to the overall error bound. However, due to their nonlinear and coupled relationship, a larger $\hat{T}_{2}$ increase the estimation error bound but a larger $\mathcal{I}_{M}$ decrease the estimation error bound in a complex manner.

2) Increasing the number of sampled data $S$ will reduce parameter estimation error bound. The reasons is $S$ appears in the bound as an exponential function. For example, $S^{-\frac{\alpha}{p}}={S^{-2}}$ if the integral discretization follows the midpoint rule, i.e., $\alpha=2$, and the $l_{p}$ norm has $p=1$. Besides, a larger $S$ leads to improved curve fitting, resulting in smaller residual sums of $\mathcal{E}_{S, R}$ and $\mathcal{E}_{SD, R}$ at each echo time.  

3) Noise is an essential factor to reduce the estimation error bound. As noise $\zeta(t)$ increases, the sum $\mathcal{I}_{No}$ and $\Delta \zeta(t_s)$ in the error bound also increases, leading to a larger estimation error bound. The deviation of sampled data $M'_{\perp}(t)$ from the real $M_{\perp}(t)$ also grows with noise, degrading fitting performance.

\begin{Corollary}[Generalization error of noisy data]\label{cor1}
The estimate of noisy data on the generalization error satisfies
\begin{align}\label{96}
\mathcal{E}_{G}(t)&=\left\|M_{\perp}(t)-\mathcal{N}(t)\right\|\nonumber\\
&\le  e^{-\frac{t}{\hat{T}_{2}  } } \Big( C_{0}^{\ast }  + \tilde{ \mathcal{E}} _{sd, R}+\tilde{ \mathcal{E}} _{d, R} + \left |\bigtriangleup e^{\frac{t_{s}}{\hat{T}_{2} } } \mathcal{R}_{s}(t_{s})   \right | \nonumber\\
& ~~~+ \left |\bigtriangleup e^{\frac{t_{s}}{\hat{T}_{2} } } \zeta(t_{s})   \right | + C_{sd}^{\frac{1}{p}} s^{-\frac{\alpha}{p}} + C_{d}^{\frac{1}{p}} d^{-\frac{\alpha}{p}} \Big),
\end{align}
where $ \left | \bigtriangleup \mathrm{e}^{\frac{t_{s}}{\hat{T}_{2} } } \mathcal{R}_{s}(t_{s})\right | = \left |\max\limits_{t_{s}} \mathrm{e}^{\frac{t_{s}}{\hat{T}_{2} } }\mathcal{R}_{s}(t_{s})-\min\limits_{t_{s}} \mathrm{e}^{\frac{t_{s}}{\hat{T}_{2} } }\mathcal{R}_{s}(t_{s})\right |$ and $ \left | \bigtriangleup \mathrm{e}^{\frac{t_{s}}{\hat{T}_{2} } } \zeta(t_{s})\right | = \left |\max\limits_{t_{s}} \mathrm{e}^{\frac{t_{s}}{\hat{T}_{2} } }\zeta(t_{s})-\min\limits_{t_{s}} \mathrm{e}^{\frac{t_{s}}{\hat{T}_{2} } }\zeta(t_{s})\right |,t_{s}\in [0,t].$
\end{Corollary}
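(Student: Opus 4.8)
The plan is to recast the pointwise generalization error as the solution of a first-order linear ODE and integrate it with the factor $e^{t/\hat{T}_2}$, exactly mirroring the mechanism of Theorem~\ref{thm2} but keeping the prefactor instead of dividing by $\mathcal{I}_M$. First I would set $e(t):=\mathcal{N}(t)-M_\perp(t)$ and recall from the proof of Theorem~\ref{thm2}, equations (\ref{58}) and (\ref{62}), the two equivalent forms of its governing equation: $\frac{de}{dt}+\frac{e}{\hat{T}_2}=\mathcal{R}_d(t)+\big(\tfrac{1}{T_2}-\tfrac{1}{\hat{T}_2}\big)M_\perp(t)$ and $\frac{de}{dt}+\frac{e}{\hat{T}_2}=\frac{d\mathcal{R}_s}{dt}+\frac{\mathcal{R}_s}{\hat{T}_2}+\frac{d\zeta}{dt}+\frac{\zeta}{\hat{T}_2}$. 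Equating the two right-hand sides isolates the only quantity that involves the unknown ground truth, namely $\big(\tfrac{1}{T_2}-\tfrac{1}{\hat{T}_2}\big)M_\perp=\frac{d\mathcal{R}_s}{dt}+\frac{\mathcal{R}_s}{\hat{T}_2}+\frac{d\zeta}{dt}+\frac{\zeta}{\hat{T}_2}-\mathcal{R}_{sd}$, and this substitution is what eventually purges $T_2$ from the bound.

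Second I would insert this identity into the first form, so the error ODE becomes $\frac{de}{dt}+\frac{e}{\hat{T}_2}=\mathcal{R}_d-\mathcal{R}_{sd}+\frac{d\mathcal{R}_s}{dt}+\frac{\mathcal{R}_s}{\hat{T}_2}+\frac{d\zeta}{dt}+\frac{\zeta}{\hat{T}_2}$, where I deliberately retain $\mathcal{R}_d$ and $\mathcal{R}_{sd}$ as two distinct symbols even though both denote the same continuous residual $\frac{d\mathcal{N}}{dt}+\frac{\mathcal{N}}{\hat{T}_2}$; this bookkeeping is precisely what lets me later discretize one copy over the $D$ collocation points and the other over the $S$ sample points. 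Multiplying by $e^{\tau/\hat{T}_2}$ turns the left side into $\frac{d}{d\tau}\!\big(e^{\tau/\hat{T}_2}e\big)$ and turns the $\mathcal{R}_s$ and $\zeta$ pairs into $\frac{d}{d\tau}\!\big(e^{\tau/\hat{T}_2}\mathcal{R}_s\big)$ and $\frac{d}{d\tau}\!\big(e^{\tau/\hat{T}_2}\zeta\big)$; integrating over $[0,t]$ yields $e^{t/\hat{T}_2}e(t)=e(0)+\int_0^t e^{\tau/\hat{T}_2}\mathcal{R}_d\,d\tau-\int_0^t e^{\tau/\hat{T}_2}\mathcal{R}_{sd}\,d\tau+[e^{\tau/\hat{T}_2}\mathcal{R}_s]_0^t+[e^{\tau/\hat{T}_2}\zeta]_0^t$.

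Third I would apply the triangle inequality term by term and finish by multiplying through by $e^{-t/\hat{T}_2}$. The constant $|e(0)|$ is the initial-condition mismatch, absorbed into $C_0^*$. The two boundary terms are handled exactly as in (\ref{43})--(\ref{66}): because the endpoints lie in $[0,t]\subseteq[0,T]$ and within the sample range, $\big|[e^{\tau/\hat{T}_2}\mathcal{R}_s]_0^t\big|\le|\Delta e^{t_s/\hat{T}_2}\mathcal{R}_s(t_s)|$, and likewise for $\zeta$. For the two residual integrals I would pass to the $l_p$ norm (for $p>1$ via H\"older, with the resulting power of $t$ absorbed into the constants) and invoke Lemma~\ref{lemma1} applied to the exponentially weighted residuals, producing $\tilde{\mathcal{E}}_{d,R}+C_d^{1/p}d^{-\alpha/p}$ from the $D$-discretization and $\tilde{\mathcal{E}}_{sd,R}+C_{sd}^{1/p}s^{-\alpha/p}$ from the $S$-discretization. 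Collecting all pieces reproduces (\ref{96}).

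The hard part will be the deliberate non-cancellation of $\mathcal{R}_d$ and $\mathcal{R}_{sd}$: as continuous functions they are identical, so enforcing the exact identity makes the two integrals annihilate and collapses the statement into the trivial equality $e=\mathcal{R}_s+\zeta$, which carries no information about discretization or trainability. The entire content of the corollary comes from refusing that cancellation, bounding the two equal integrals \emph{separately} and discretizing each over a different node set, so that the final estimate is expressed purely in accessible trained quantities and their integration errors. A secondary point requiring care is checking that Lemma~\ref{lemma1} remains valid after the residuals are multiplied by the smooth bounded factor $e^{\tau/\hat{T}_2}$, i.e. that the weighted residuals retain enough second-derivative regularity for the constants $C_d$ and $C_{sd}$ to stay finite.
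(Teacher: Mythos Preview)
Your proposal is correct and is precisely the elaboration the paper has in mind: the paper's own proof is the single sentence ``derivation is similar to Theorem~\ref{thm2}'', and your plan---pass to the error ODE, apply the integrating factor $e^{t/\hat T_2}$, integrate over $[0,t]$, bound the boundary terms by the sampled max--min range as in (\ref{43})--(\ref{66}), and discretize the two residual integrals via Lemma~\ref{lemma1}---is exactly that structure carried out in detail. Your observation that the $\mathcal R_d$ and $\mathcal R_{sd}$ integrals coincide as continuous objects, so that the bound is obtained only by refusing the exact cancellation and applying the triangle inequality before discretizing each copy over its own node set, is the key mechanism and you handle it correctly; this is what produces both $\tilde{\mathcal E}_{d,R}$ and $\tilde{\mathcal E}_{sd,R}$ in (\ref{96}).
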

Proof: derivation is similar to Theorem 2.

Corollary \ref{cor1} provides the generalization error bound that measures the largest difference between the PINN-fitted curve $\mathcal{N}(t)$ and the real curve $M_{\perp}(t)$ ($M_{real}(t)$) in the continuous-time domain. Although the real curve $M_{\perp}(t)$ is unknown throughout the timeline, this corollary provides an upper bound curve over entire continuous-time domain, allowing to estimate potential error at any time $t$. Thus, although ground-truth $T_2$ map is not unknown in vivo imaging, it is possible to validate the accuracy of estimated $T_2$ by acquiring more samples (TEs), obtaining a better $T_2$ map to generate MRI images, and then measuring the image error to the physically acquired image.

\section{RESULTS}
In this section, the quantitative and theoretical analysis is validated through three distinct datasets: 1) numerical simulations on a cardiac model, 2) experimental measurements from a industrial standard qMRI phantom, and 3) clinical scans from patient cohorts. For simulated data, theoretical quantification errors are validated from multiple controlled factors, such as the the signal-to-noise ratio (SNR) and the number of samples, i.e. the number of TE. Additionally, we compare $T_{2}$ values estimated by PINN with the conventional least square method, the clinical gold standard, and with DFN; PINN achieves nearly identical accuracy to least square and outperforms DFN. In phantom data, PINN provides more precise $T_{2}$ estimation than both least square and DFN methods in the myocardial $T_{2}$ range. For in vivo patient data, where ground truth is unavailable, our theoretical error bound indicate that PINN achieves low quantitative errors, and multiple radiologists assess the quality of the resulting $T_{2}$ maps to evaluate diagnostic reliability.

\subsection{Numerical cardiac experiments}
\begin{figure}[]
  \centering
  \includegraphics[width=0.8\textwidth]{./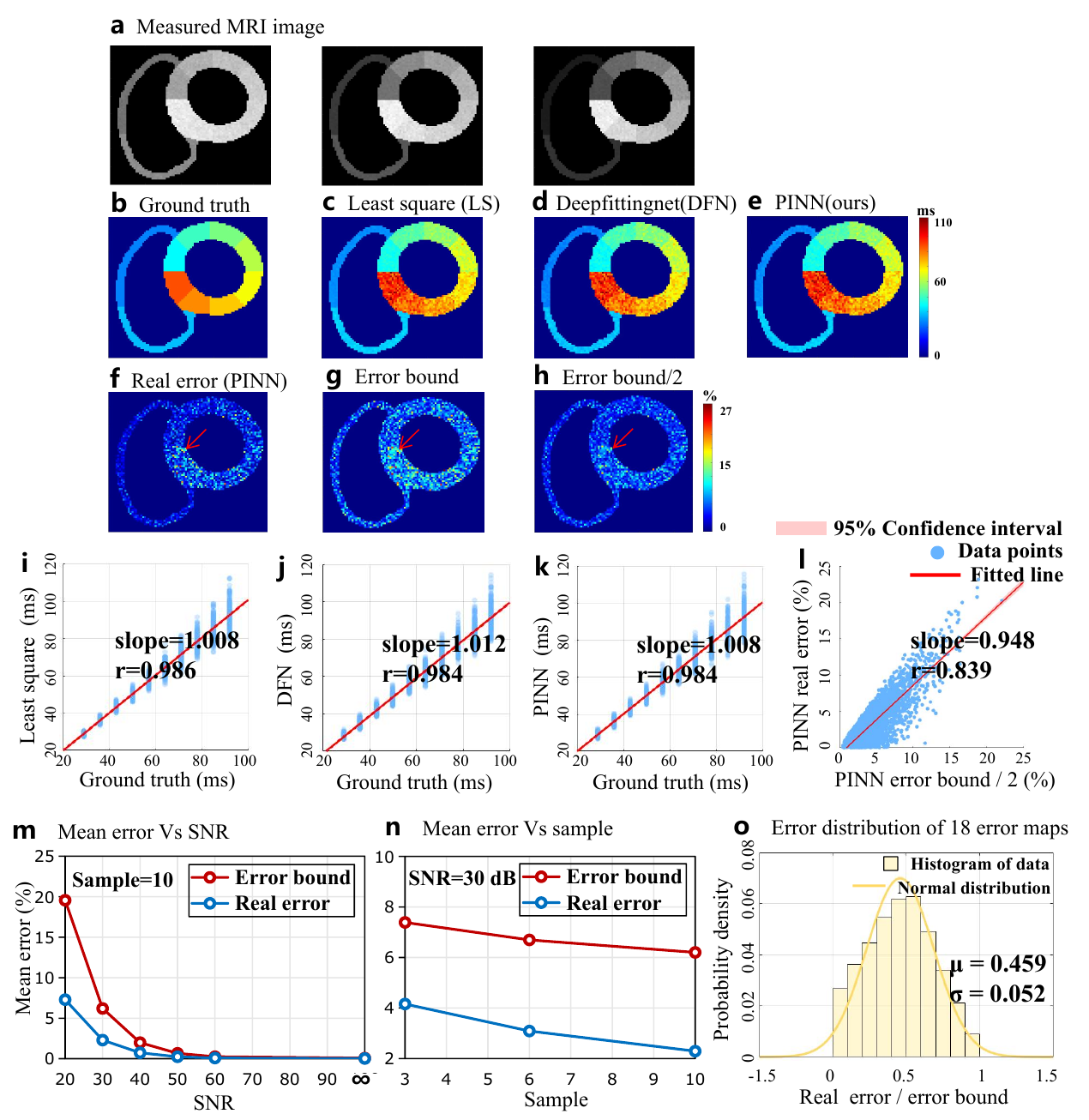}
  \caption{$T_{2}$ mapping and error bound of PINN of the numerical cardiac model. 3 echo times (TEs) were selected from 30 ms to 90 ms with an interval of 30 ms (TE 1 = 30 ms, TE 2 = 60 ms, TE 3 = 90 ms). The signal-to-noise ratio (SNR) = 30 dB. (i)-(k) are $T_{2}$ correlation between three methods (least square, DFN and PINN) and groud truth. (l) is correlation between real error (PINN) and theoretical error bound/2. (m,n) Estimation error of PINN versus the SNR and the number of samples: (m) SNR, (n) the number of samples (TEs), (o) an approximate normal distribution of the ratio (real error/theoretical error bound in 30 - 100 ms pixels). Note: Mean error in (m) and (n) is an average value of all an entire error map. The statics in (o) is computed over pixels that have $T_2$ value between 30 - 100 ms since this is an possible range for normal and acute myocardial infarction tissues.}
  \label{figure2}
\end{figure}

\begin{figure}[]
  \centering
  \includegraphics[width=0.8\textwidth]{./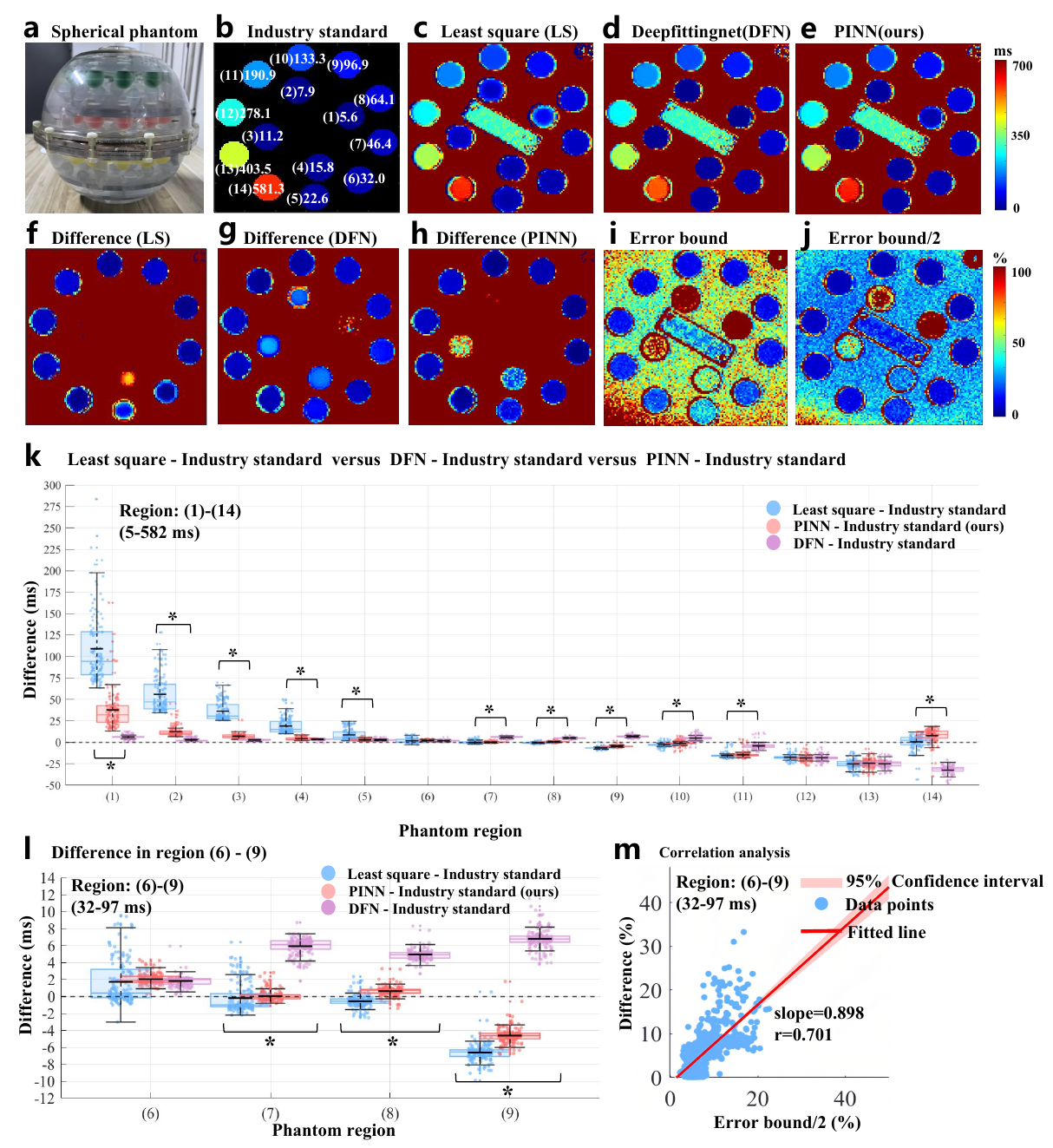}
  \caption{$T_{2}$ estimation error of a water phantom. (a) Phantom with 14 spheres. (b) Reference $T_{2}$ value (Industry standard) in 14 regions. (c)-(e) are estimated $T_{2}$ by least square, DFN and PINN, respectively. (f)-(h) are the difference of $T_{2}$ value between estimation and industry standard reference value. (i) and (j) are pixel-wise theoretical error bound and its half value. (k) and (l) are the difference between estimated $T_{2}$ and reference values in all regions and region (6)-(9), respectively. (m) Correlation analysis for theoretical error bound/2 and empirical difference. Note: 10 TEs were selected from 15 ms to 150 ms with an interval of 15 ms on a 3T NeuMR Universal scanner. $*$ represents the significant difference between two methods (p $<$ 0.05).}
  \label{figure66}
\end{figure}

A numerical cardiac model is synthesized with given $T_{2}$-values (Fig. \ref{figure2}). In the following, we will verify the $T_{2}$ estimation error bound and how SNR (The definition is in Appendix B) and the number of samples (TEs) affect this error.

The numerical cardiac model is constructed on a 80×90 pixel grid with 10 distinctive $T_{2}$ values (ground truth) to simulate 10 myocardial regions. Using the Bloch equation, we generated simulated $T_{2}$-weighted images at various echo times (TEs). Three TEs were selected from 30 ms to 90 ms with an interval of 30 ms; Six TEs were selected from 15 ms to 90 ms with an interval of 15 ms; Ten TEs were selected from 10 ms to 100 ms with an interval of 10 ms. For each, we generated both noiseless images and noisy versions with Gaussian noise at SNR levels of 60 dB, 50 dB, 40 dB, 30 dB, and 20 dB. $T_{2}$ maps are estimated by PINN and least square. PINN parameters include  $D$ = 1001, $S$ = 3 (6 / 10), $w_{D,R}$ = 0.01 and $w_{S,R}$ = 1 and theoretical parameters include $p$ = 1 and $\alpha=2$. The relative percentage error = $|T_2 - \hat{T}_2|/T_2$ is measured since the ground-truth value $T_2$ is available in this simulation.

A high consistency on $T_{2}$ value was observed across all three methods (least square, DFN and PINN) when compared with the ground truth (similar $T_{2}$ maps in Fig. \ref{figure2}(c)-(e)). The least-square method as the gold standard achieved the highest quantitative accuracy, with the strongest correlations and slopes closest to unity ($r=0.986$ and slope = 1.008 in Fig. \ref{figure2}(i)). PINN achieved comparable performance with slightly lower but still high accuracy ($r=0.984$ and slope = 1.008 in Fig. \ref{figure2}(k)). DFN showed good agreement as well, though with marginally larger deviations in slope ($r=0.984$ and slope = 1.012 in Fig. \ref{figure2}(j)). Moreover, the real error of PINN (Fig. \ref{figure2}(f)) looks similar to its theoretical bound/2 (Fig. \ref{figure2}(h)). This observation is further confirmed by the Pearson correlation coefficient $r=0.839$ in Fig. \ref{figure2}(l).

How SNR and the number of samples (TEs) affect the $T_2$ estimation error of PINN is analyzed in Fig. \ref{figure2}(m,n). Both theoretical and real error decrease fast with increasing SNR (Fig. \ref{figure2}(m)). Increasing the number of samples (TEs) also reduces errors (Fig. \ref{figure2}(n)). In all 18 cases (the number of TE is 3, 6 and 10; noise-free and noisy condition under SNR = 60 dB, 50 dB, 40 dB, 30 dB and 20 dB), the mean value of the ratio (real error/theoretical error bound) is 0.459 and its standard deviation is 0.052 under the normal distribution (p-value = 0.56 $>$ 0.05) (Fig. \ref{figure2}(o)) \cite{lilliefors1967ks, habibzadeh2024normality}, suggesting that theoretical error bound/2 can serve as a good estimate of the real error.

\subsection{Realistic data experiments}

Two realistic datasets are acquired from: (1) water phantom, (2) patients with acute myocardial infarction (AMI). PINN parameters are set as $D$ = 1001, $S$ = 10 for phantom (9 / 3 for patients), $w_{D,R}$ = 0.01 and $w_{S,R}$ = 1 for all experiments below. All patient-related data in this research have been approved by the ethics committee.

\subsubsection{Phantom}
A water phantom (Fig. \ref{figure66}(a)), an industry standard, of $T_{2}$ values (Fig. \ref{figure66}(b)) was adopted as the reference. Its standard $T_{2}$ value of 14 spheres was measured and averaged across different scanners. 

For the reference low $T_{2}$-values  (5.6 ms, 7.9 ms, 11.2 ms and 15.8 ms), least square presents much larger difference than PINN does and DFN shows less difference than other two methods (region (1)-(5) in Fig. \ref{figure66}(k)). For the reference high values (133.3 ms, 190.9 ms, 278.1 ms, 403.5 ms and 581.3 ms), both least square and PINN methods achieve comparable smaller differences, while DFN method is not as good as them (region (10)-(14) in Fig. \ref{figure66}(k)). For the reference moderate $T_{2}$-values (32.0 ms, 46.4 ms, 64.1 ms and 96.9 ms), least square performs sligtly lower difference than PINN (Region (8) in Fig. \ref{figure66}(l)) and DFN shows sligtly lower difference than PINN (Region (6) in Fig. \ref{figure66}(l)). However, PINN outperforms least square (Region (6) and (7) in Fig. \ref{figure66}(l)) and DFN (Region (7)-(9) in Fig. \ref{figure66}(l)) on much smaller variations and smaller difference than least square (Region (9) in Fig. \ref{figure66}(l)). Since the $T_{2}$ values of the human heart generally fall within 30 – 100 ms \cite{giri2009}, the improved accuracy of PINN suggests that it may provide better $T_{2}$ quantification for real cardiac applications. Besides, theoretical error bound/2 fits well with the practical difference from industry standard (Pearson correlation coefficient $r$ = 0.701 in Fig. \ref{figure66}(m)).

\subsubsection{Patients with acute myocardial infarction (AMI)}
  Here, we explore the potential of $T_{2}$ mapping in contrast injection-free imaging. $T_{2}$ mapping is noninvasive, contrast-free and cost-effective. It has been shown to partially or fully replace late gadolinium enhancement (LGE), particularly in AMI. Although LGE remains the gold standard for detecting myocardial fibrosis and necrosis, it requires contrast injection, making it invasive, costly, and inappropriate for patients with gadolinium allergies or renal impairment \cite{ghaffari2025, obrien2022}. Thus, $T_{2}$ mapping has important clinical value. Even though, LGE images were used as a reference to validate the lesions identified by $T_{2}$ mapping and enhance the rigor and clinical relevance of our findings. 

  Patient data includes black-blood dataset and bright-blood dataset. Black-blood dataset is acquired at 9 TEs from 48 AMI patients(mean age: 72$\pm$16). Bright-blood images is acquired at 3 TEs from 46 AMI patients (mean age: 58$\pm$10). 

Two representative $T_{2}$ maps are shown in Fig. \ref{figure10}(a). From two type images, PINN, DFN and least square methods estimate highly close $T_{2}$ values (1 ms difference for bright-blood images; PINN differs from least square by $\sim$1 ms, while DFN differs by $\sim$5 ms for bright-blood images) at infarction lesions. At these regions, the error bound of PINN is 5\% for black-blood images or 7\% for black-blood images. These observations means that the maximal $T_{2}$ estimation error of PINN is $74 \text{ms}*5\%=3.7 \text{ms}$ or $75 \text{ms}*7\%=5.25 \text{ms}$, which are very low. Thus, the theoretical error bound/2 is remarkably low in lesions. In contrast, in the healthy myocardium (red circles), $T_{2}$ values of DFN are outside the normal range ($\sim$60–70 ms in black-blood images and $\sim$30–40 ms in bright-blood images), while PINN and least square methods remain within physiological limits. Overall, PINN provides accurate and reliable $T_{2}$ estimation in both infarcted and non-infarcted regions.

\begin{figure}[]
  \centering
  \includegraphics[width=0.8\textwidth]{./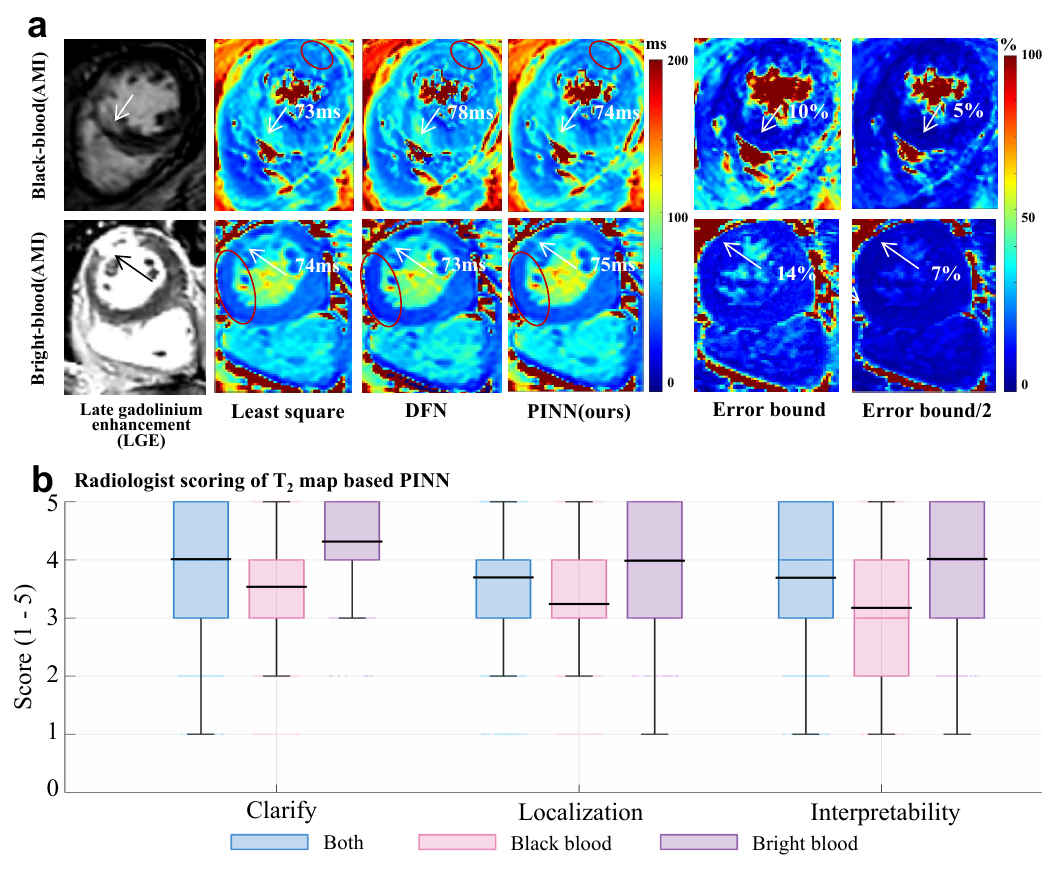}
  \caption{Representative $T_2$ maps of AMI patients and radiologist evaluation on all patient $T_2$ maps. (a) $T_{2}$ mapping and error bound for AMI patients, where the first row is estimated from black-blood images (9 TEs were selected from 8.845 ms to 79.605 ms with an interval of 8.845 ms on a Philips Ingenia 3T scanner) while the second row is estimated from bright-blood images (3 TEs were set as 0 ms, 30 ms and 55 ms on a United Imaging UMR790 3T scanner). (b) Radiologist scoring of $T_2$ map estimated by PINN. Three image metrics (Clarify, Localization, and Interpretability) were set on a 5-point scale (5 = excellent, 4 = good, 3 = moderate, 2 = low, 1 = poor, minimal interval=1). Box plots show the median (colored line within the box), the mean value (black horizontal line in the box), the 25th and 75th percentiles (box edges),  and whiskers for data range. Individual scores are shown as jittered circles. The scoring dataset included images of $T_{2}$ quantification results from 3–7 short-axis cardiac slices of 94 AMI patients, with corresponding LGE images provided as references.}
  \label{figure10}
\end{figure}

\begin{figure}[]
  \centering
  \includegraphics[width=0.8\textwidth]{./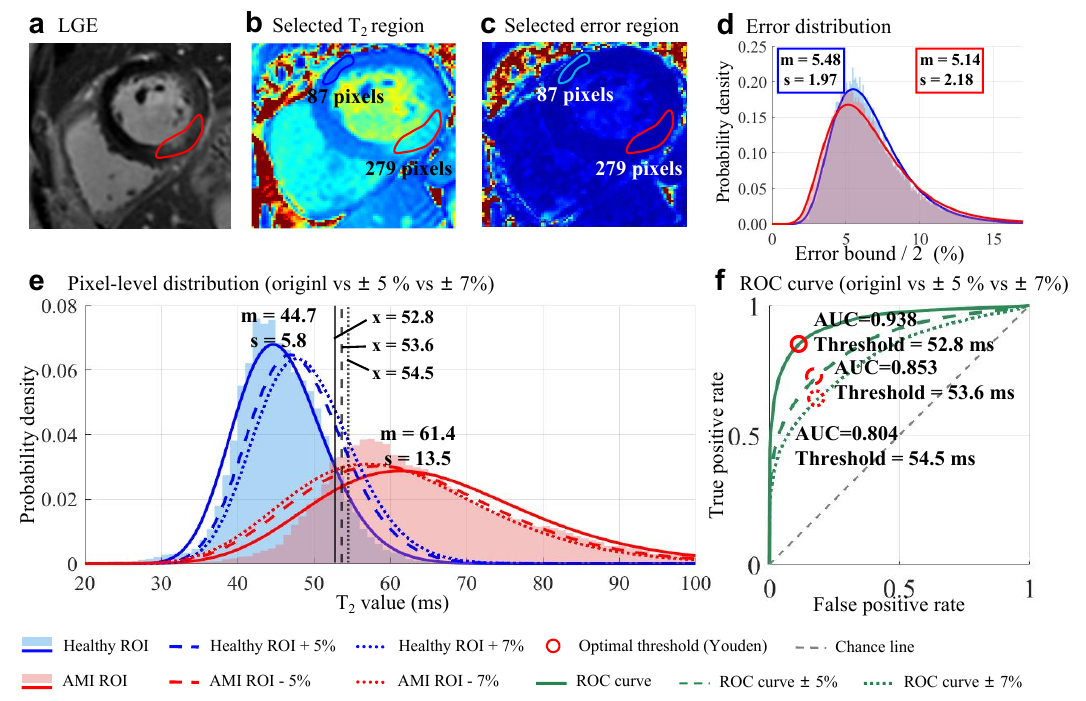}
  \caption{Pixel-wise log-normal distribution of lesion and non-lesion regions in 163 slices from 94 AMI patients. (a) Lesion region of LGE image. (b) Lesion and non-lesion regions on a $T_2$ map. (c) Lesion and non-lesion regions on the theoretical error bound/2 map. (d) Log-normal distribution of theoretical error bound/2 using mode and its standard deviation. (e) Log-normal distributions under original and 5\%, and 7\% error tolerances (original (solid), shifted 5\% (dashed), Shifted 7\% (dotted); 5\% is mode, 7\% corresponds to the mode with a standard deviation 2\%). (f) ROC curves and optimal thresholds using the Youden method under 5\% and 7\% error tolerances. Statistical values were extracted from diagonal positions of healthy regions corresponding to lesion-focused locations. $m$ and $s$ indicates the mean and standard deviation, respectively.}
  \label{figure13}
\end{figure}

\begin{figure}[]
  \centering
  \includegraphics[width=0.8\textwidth]{./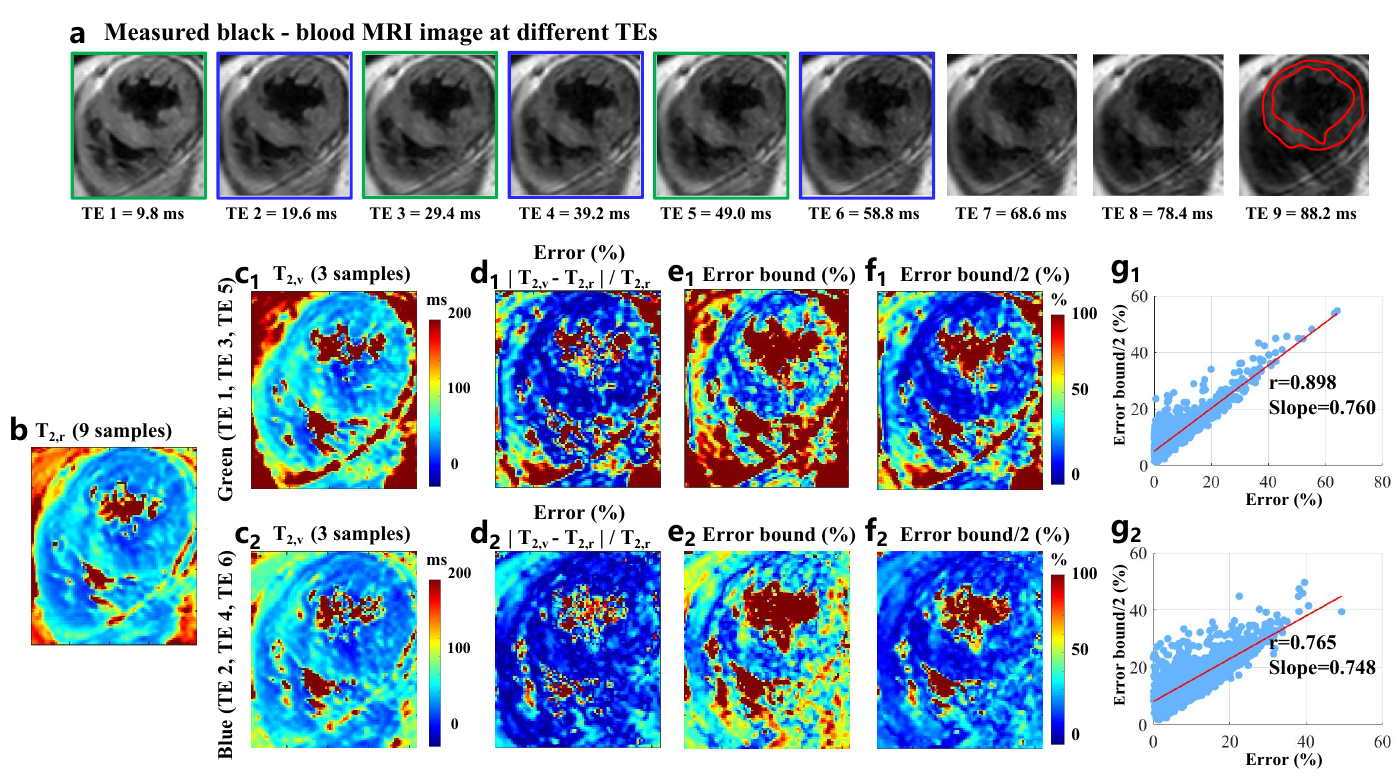}
  \caption{Validation experiment of an AMI patient. (a) is MRI images acquired at 9 TEs. (b) is the reference $T_2$ ($\mathrm{T}_{2,r}$) estimated from 9 samples. (c1) and (c2) are estimated $T_2$ ($\mathrm{T}_{2,v}$) from green-line and blue-line groups of 3 TE images in (b). (d1) and (d2) are corresponding empirical error by using (b) as reference. (e1) and (e2) are theoretical error bound. (f1) and (f2) are theoretical error bound/2. (g1) and (g2) are correlations of red-line myocardial area between the theoretical error bound/2 and empirical error (Error(\%) = $|\mathrm{T}_{2,v}$ - $\mathrm{T}_{2,r}|$/$\mathrm{T}_{2,r}$).}
  \label{figure7}
\end{figure}

  Subjective quality evaluation of $T_{2}$ map (Fig. \ref{figure10}(b)) is provided by four radiologists (20/15/11/3 years of experience) on 94 AMI patient data. In terms of clarity, localization, and interpretability, PINN achieves an average of 3.2-3.5 point (moderate to good level) on the black-blood dataset and an average of 4.0-4.3 point (good to excellent level) on the bright-blood dataset. Taking both datasets into accounts, the clarify of $T_{2}$ maps reaches good level while localization and interpretability lies in the range of moderate to good levels. Thus, the image quality of $T_{2}$ map estimated by PINN is visually applicable.

  To further validate the diagnosis value of the theoretical error bound/2, we perform the statistical analysis on the accuracy of distinguishing the lesion from non-lesion pixels. These lesion pixels are located in regions (Fig. \ref{figure13}(b)(c))) annotated by an radiologist under the guidance of LGE. A total of 163 slices of 94 AMI patients are selected by the radiologist with lesion score $>$ 2 of three metrics in Fig. \ref{figure10}(b). 

  Two log-normal distributions are observed in lesion (blue) and non-lesion (red) regions (solid curves in Fig. \ref{figure13}(e)) \cite{grinberg2012}. Normal (or abnormal) tissues in the non-lesion (or lesion) region will be misjudged being abnormal (or normal) if their $T_2$ values are larger (or lower) than a threshold, i.e. in the right (or left) side of vertical line in Fig. \ref{figure13}(e). An optimal threshold (x = 52.8 ms) is determined based on sensitivity and specificity analysis using the Youden index, achieving highly accurate classification (AUC = 0.938 in Fig. \ref{figure13}(f) (solid line)). Since modes (peak) in the log-normal distribution of error bound/2 are approximately 5\% in both regions (Fig. \ref{figure13}(d)), a worse classification is reached when the mode $T_{2}$ value of lesion (and non-lesion) tissue is increased (and decreased) by 5\%, leading to a worse threshold (x=53.6 ms). Even under this worse threshold, robust classification is still maintained (AUC=0.853, dashed line in Fig. \ref{figure13}(f)). A 7\% shift (one standard deviation) yields an AUC of 0.804 (dotted line in Fig. \ref{figure13}(f)). Both analysis demonstrate that the theoretical error bound can be used to predict the lower bound of diagnostic accuracy of lesion tissues in clinics.

Another way to validate theoretical error bound is to compare the estimated $T_2$  from few samples with an outstanding $T_{2}$ reference, which is estimated from maximal samples. Here, we compare $T_{2}$ parameters estimated from 3 samples (TE = 9.8 ms, 29.4 ms, 49 ms in green-line annotated images and TE = 19.6 ms, 39.2 ms, and 58.8 ms in blue-line annotated images in Fig. \ref{figure7}(a)) with those from 9 TE images. An empirical error is defined as a relative percentage of the estimated error over the reference $T_{2}$ ($T_{2,r}$) in the myocardium region (marked red region in the last column of Fig. \ref{figure7}(a)). The empirical error in Fig. \ref{figure7}(d1) (or (d2)) looks similar to theoretical error bound/2 in Fig. \ref{figure7}(f1) (or (f2)). Since the SNR of lower TE images are higher than larger TE ones due to exponentially declined image intensity, the former images lead to closer estimated $T_{2}$  ($T_{2,v}$) (Fig. \ref{figure7}(g1)) to the reference than the latter (Fig. \ref{figure7}(g2)). In both groups, PINN obtains strong correlations between the theoretical error bound/2 and empirical error (correlation coefficient r=0.898 in Fig. \ref{figure7}(g1) or 0.765 in Fig. \ref{figure7}(g2)).

\section{Discussion}
\subsection{More results on generalization error}
Generalization error mainly characterizes the pixel-wise difference between true pixel intensity measured in experiments and pixel intensity predicted by PINN, under different samples (TEs). For the numerical cardiac model (noise was simulated based on real cardiac data acquisition, with SNR ranging from 50.5 dB to 46.5 dB in steps of 0.5 dB, it has a real generalization error $|\mathcal{N}(t)-M_{\perp}(t)|$ over evolutionary time at each pixel. Fig. \ref{Gerror-curve} presents representative results for two pixels with $T_{2}=70$ ms: (60, 11) and (66, 12). The analysis shows that the generalization error bound can approximately capture the temporal trend of the real error. Additional results  show that fewer TEs increase both errors, while higher SNR reduces them. For an AMI patient data in Fig. \ref{AMI-verify(135)}, analysis shows that the error bound is sensitive to regions with larger deviation and this bound can roughly capture the temporal trend of the error.

\begin{figure}[]
  \centering
  \includegraphics[width=0.8\textwidth]{./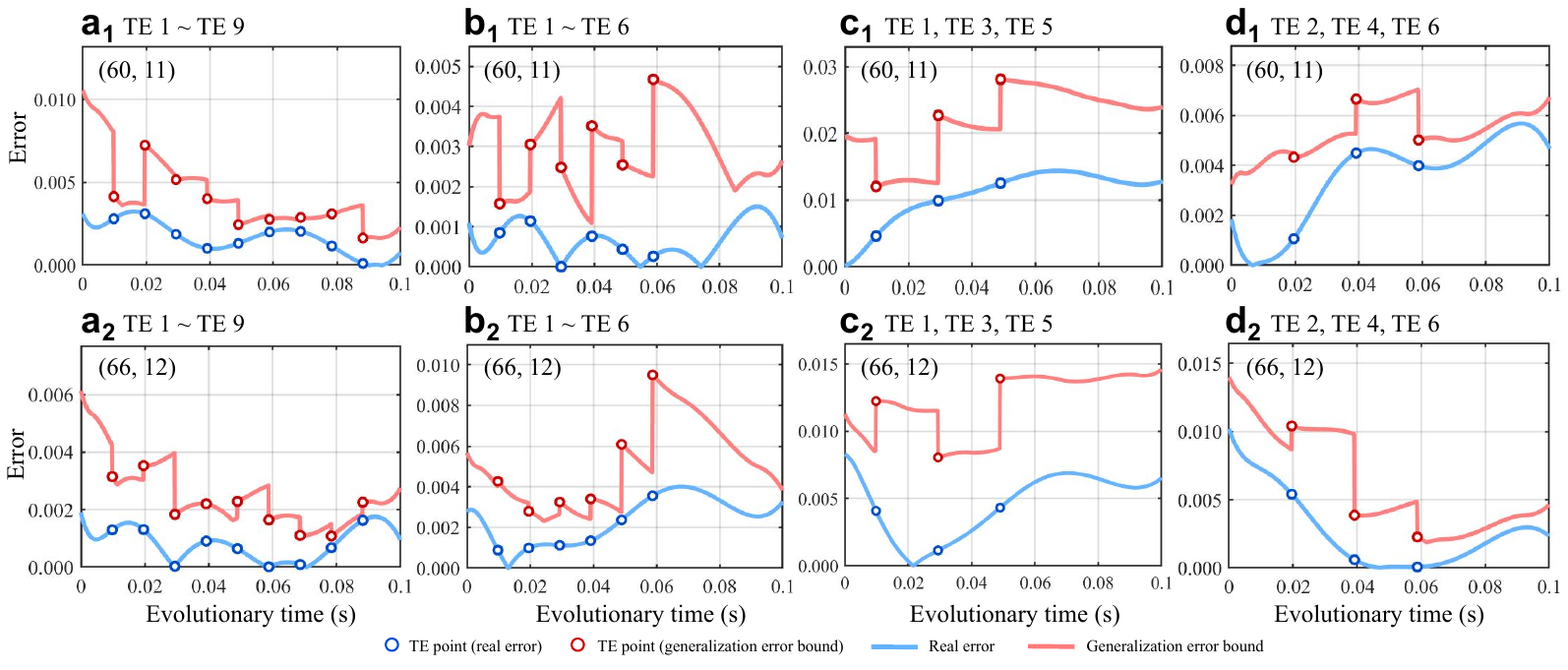}
  \caption{Generalization error  at the pixel (66, 12) and (60, 11) with $T_{2}=70 $ ms of the numerical cardiac model under different TEs. TE 1 - TE 9 were selected from 9.8 ms to 88.2 ms with an interval of 9.8 ms.}
  \label{Gerror-curve}
\end{figure}

\begin{figure}[]
  \centering
  \includegraphics[width=0.8\textwidth]{./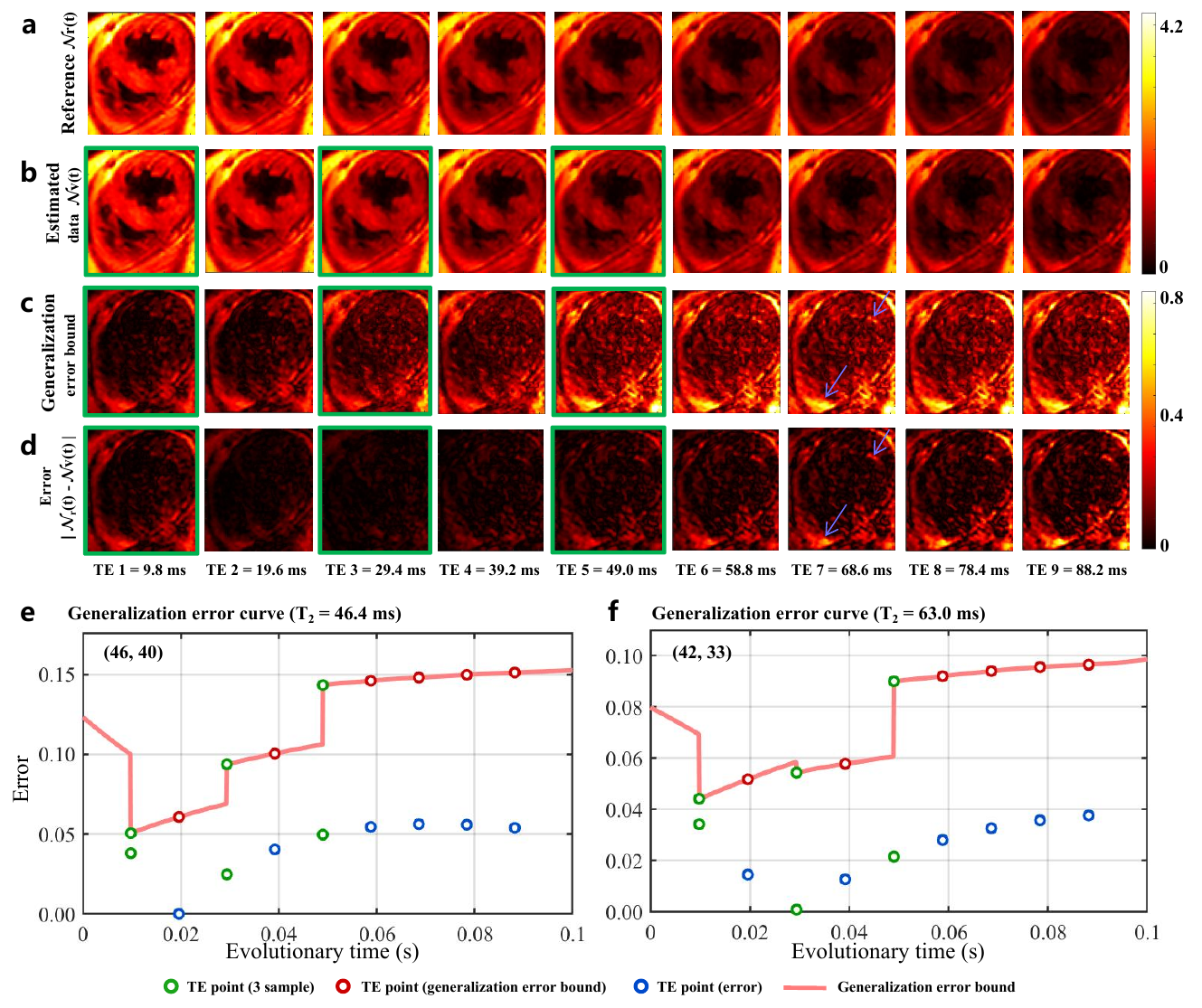}
  \caption{Validation experiment of the generalization error. (a) is the reference $\mathcal{N}_{r}(t)$ estimated from 9 samples. (b) is the validation $\mathcal{N}_{v}(t)$  $\mathrm{T}_{2,v}$ estimated of 3 samples (green frame at TE 1, TE 3, TE 5); six generated $\mathcal{N}_{v}(t)$ from $\mathrm{T}_{2,v}$ estimated of 3 samples (no green frame). (c,d) is the generalization error bound of magnetization $M_{\perp}(t)$ and error = $| \mathcal{N}_{r}(t)$ - $\mathcal{N}_{v}(t)|$. (e,f) generalization error curves on two representative pixels with $T_{2}=46.4$ ms: (46, 40) and $T_{2}=63.0$ ms: (42, 33). Blue arrows in TE = 68.6 ms shows that the error bound is very sensitive to the places where the errors are higher.}
  \label{AMI-verify(135)}
\end{figure}

\subsection{Extension of theory}
Extension to other $T_{2}$ quantification, e.g. in brain imaging, is possible since our theory was developed in the context of the general Bloch equation. This theory cloud be extended to any parametric imaging or even not imaging, that satisfies an ordinary differential equation
\begin{equation}\label{}
\frac{d U(t)}{d t} + a\cdot U(t) =0,
\end{equation}
where a single parameter $a$ is unknown. But approximating the real error with theoretical bound depends on applications. Besides, for cases involving multiple unknown parameters, e.g. multi-parametric imaging \cite{yang2023}, new theory should be established since multi-parametric differential equations and even different network architectures should be applied.

\section{Conclusion}
In this study, we proposed a physics-informed neural networks (PINN) for cardiac $T_2$ quantification and theoretically proved the upper bound of quantification error. This bound is set up in the absence of ground truth or gold-standard data, highlighting its potential clinical utility. We found that the theoretical error bound/2 is statistically close to the real quantification error in synthetic data and an industry standard phantom data. This observation is further validated in 9 TEs realistic experiments. Through the analysis of radiologist labeled lesion and non-lesion regions of myocardial infarction on two-vendor two-center two-image-type datasets, we found that the $T_2$ values in lesion (or non-lesion) satisfies log Gaussian distribution and the mode value of theoretical error bound/2 is $5.14\%$. Importantly, in phantom data, PINN outperforms both the dataset-driven deep learning method DFN and the conventional least square method, demonstrating its potential for precise $T_{2}$ quantification in real-world measurements. Further, we demonstrated the clinical value of the theoretical bound to predict the lower bound of diagnostic accuracy of lesion tissues in acute myocardial infarction. Even with no training data and ground-truth quantitative value in magnetic resonance imaging, this work provides an theoretical way to characterize the error for deep learning quantification.

\bibliographystyle{IEEEtran}

\end{document}